\newcommand{\system}{Lotto}
\newcommand{\PHB}[1]{\noindent\textbf{#1}\hspace{.5em}} % at the beginning of a section
\newcommand{\PHM}[1]{\vspace{.2em}\noindent\textbf{#1}\hspace{.5em}} % in the middle of a section
\newtheorem{theorem}{Theorem}
\newtheorem{corollary}{Corollary}
\newtheorem*{rep@theorem}{\rep@title}
\newcommand{\newreptheorem}[2]{%
	\newenvironment{rep#1}[1]{%
		\def\rep@title{#2 \ref{##1}}%
		\begin{rep@theorem}}%
		{\end{rep@theorem}}}
\definecolor{myred}{rgb}{0.58, 0.06, 0}
\newcommand{\client}[1]{\textcolor{myred}{\underline{#1}}}
\definecolor{mygreen}{rgb}{0.88, 0.91, 0.89}
\newcommand{\server}[1]{\hl{#1}}
\newcommand{\informed}[1]{\textit{[#1]}}
\setlist{leftmargin=5mm, itemsep=0mm}
\newcommand{\widesim}[3][3.5]{
  \mathrel{\overunderset{#2}{#3}{\scalebox{#1}[1]{$\sim$}}}
}
\definecolor{cadmiumgreen}{rgb}{0.0, 0.42, 0.24}
\definecolor{myblue}{RGB}{0, 11, 172}
\begin{document}

%don't want date printed
\date{}

% make title bold and 14 pt font (Latex default is non-bold, 16 pt)
\title{\Large \system{}: Secure Participant Selection against Adversarial Servers\\in Federated Learning}
%\author{Paper ID: \#37}

\author{
	\rm{
		Zhifeng Jiang$^{\text{1}}$ \enskip
		Peng Ye$^{\text{1}}$ \enskip
		Shiqi He$^{\text{2}, *}$ \enskip
		Wei Wang$^{\text{1}}$ \enskip
		Ruichuan Chen$^{\text{3}}$ \enskip
		Bo Li$^{\text{1}}$
	}\\
	{
		$^{\text{1}}$HKUST \enskip
		$^{\text{2}}$University of Michigan \enskip
		$^{\text{3}}$Nokia Bell Labs
	}
}

\pagestyle{empty}  % to remove page numbers
\maketitle

%!TEX root = ../main.tex

\begin{abstract}
    
In Federated Learning (FL), many privacy-enhancing techniques, such as secure
aggregation and distributed differential privacy, provide security guarantees
under the assumption of having an \emph{honest majority} among participants.
However, an adversarial server can strategically select compromised clients
to create a dishonest majority, thereby undermining the system's security
guarantees. In this paper, we present \system{}, an FL system that addresses
this fundamental, yet underexplored issue by providing secure participant
selection in the presence of an adversarial server. \system{} supports two selection
algorithms: \emph{random} and \emph{informed}. To ensure random selection
without a trusted server, \system{} enables each client to autonomously
determine their participation using \emph{verifiable randomness}. For
informed selection, which is more vulnerable to manipulation, \system
{} approximates the algorithm by employing random selection within a \emph
{refined client pool}. Theoretical analysis shows that \system
{} effectively aligns the proportion of server-selected compromised
participants with the base rate of dishonest clients in the population.
Large-scale experiments further reveal that \system{} achieves
time-to-accuracy performance comparable to that of insecure selection
methods.

\end{abstract}
{\let\thefootnote\relax\footnote{{$^*$Work done while the author was doing research internship at HKUST.}}}
%!TEX root = ../main.tex
\section{Introduction}
\label{sec:intro}

Edge devices, such as smartphones and laptops, are becoming increasingly
powerful and can collaborate to build high-quality machine learning
(ML) models using data collected on the devices~\cite{kairouz2019advances}. To
protect data privacy, major companies like Google and Apple have
adopted \emph{federated learning} (FL)~\cite{mcmahan2017communication} for
tasks in computer vision (CV) and natural language processing (NLP) across
client devices~\cite{granqvist2020improving, song2022flair, xu2023federated}.
In FL, a server dynamically samples a small subset of clients from a large
population in each training round. The sampled clients, aka \emph
{participants}, compute individual local updates using their own data and
upload only the updates to the server for global aggregation, without
revealing the training data~\cite{bonawitz2019towards}.

However, simply keeping client data undisclosed is insufficient to
preserve data privacy. Recent studies have shown that sensitive data
can be leaked through individual updates during FL training~\cite
{geiping2020inverting, yin2021see, wang2022protect, zhao2023secure, yue2023gradient}, and it
is also possible to infer a client's data from trained models~\cite
{shokri2017membership, carlini2019secret, song2019auditing,
nasr2019comprehensive} by exploiting their ability to memorize
information~\cite{carlini2019secret, song2019auditing}. To minimize data
leakage, current FL systems use \emph{secure aggregation} techniques~\cite
{bonawitz2017practical, bonawitz2019towards, bell2020secure} to ensure that
adversaries only learn the aggregated updates, not individual ones.
Additionally, these systems may employ \emph{distributed differential privacy}
(DP)~\cite{kairouz2021distributed, agarwal2021skellam, stevens2022efficient}
to limit the extent of data leakage from the aggregated update (\cref
{sec:background_defense}). Since the server may not always be trusted, these
privacy-enhancing techniques commonly assume that the \emph{majority} of the
participants act honestly and can jointly detect and prevent server misbehavior.

Nonetheless, this assumption may not hold in practice -- an adversarial
server can strategically select compromised clients with whom it colludes, resulting
in a practical attack. While the overall client population is
large, the number of participants selected in each round is often limited to a
few hundred for efficiency reasons~\cite{bonawitz2019towards}. Consequently,
even if the majority of the population is honest (e.g., 99\% out of 100,000
clients), the server can still select enough adversaries to
create a dishonest majority among participants.

The presence of a dishonest majority 
undermines the downstream privacy-enhancing techniques, leading to
significant security vulnerabilities. As shown in \cref{sec:background_loophole}, the
privacy guarantees provided by distributed DP degrade rapidly as the
dishonest cohort expands~\cite{kairouz2021distributed, agarwal2021skellam,
stevens2022efficient}. Additionally, secure aggregation can only tolerate a
certain proportion of clients colluding with the server; exceeding this
limit would enable the adversary to reconstruct a client's update~\cite
{bonawitz2017practical}. It is hence crucial to prevent the server from maliciously manipulating the client selection process.

Despite its importance, this problem remains largely unaddressed in the literature. The main
challenge arises from the use of a \emph{central server} in the cross-device
FL scenarios~\cite{kairouz2019advances}. In the absence of efficient
peer-to-peer communication across a vast array of heterogeneous devices~\cite
{yang2021characterizing, lai2022fedscale}, the server acts as a central hub for
facilitating message exchanges between clients. Since the server is untrusted,
an honest client lack a reliable view of other clients, such as their
availability, utility, and integrity. Consequently, it is difficult for a client to
verify whether the server has correctly executed the prescribed participant
selection algorithm, as this requires the true knowledge of and/or
inputs from other clients.

In this paper, we present \system{}, a novel framework that enables secure
participant selection in FL for the first time. \system{} aims to provide the
following desired property: \textbf{among the selected participants, the
proportion of the compromised ones approximately aligns with the base rate of
dishonest clients in the overall population}. \system{} uses \emph
{verifiable randomness} in the selection process.

Specifically, to achieve secure random selection~\cite
{mcmahan2017communication}, \system{} allows each client in the population to
determine its participation in a training round by computing \emph
{verifiable random functions} (VRFs)~\cite
{dodis2005verifiable, micali1999verifiable} using its secret key and public
inputs. When a client chooses to participate, \system{} collects the generated
randomness and the associated VRF proof, which can be verified by other participants to
ensure the integrity and validity of the randomness. This approach ensures that each
client's participation is provably independent, fair, and unpredictable
for other clients. However, it is still possible for an adversarial server to 
exploit this process and increase the number of dishonest clients by, for example, 
sending incorrect and/or a selectively chosen messages.
To mitigate such misconduct, \system{} lets honest clients verify critical messages,
adding an additional layer of security.
\system{} \emph{provably ensures} that the compromised fraction of selected participants remains close to the base rate in the population (\cref{sec:design_random}).

\system{} also provides security guarantee for more advanced
algorithms, referred to as \emph{informed selection}, which are commonly used in FL
to select the best-performing participants for optimal training
efficiency~\cite{zhang2021client, nishio2019client, wang2020optimizing,
chai2020tifl, lai2021oort, kim2021autofl}. These algorithms rely on
client-specific measurements, such as processing speed and data quality, which
are difficult to verify. Instead of precisely following the
selection logic of these algorithms, \system{} \emph{approximates} them by transforming
an informed selection problem into a random one. To achieve this, it lets the server
\emph{refine} the client pool by excluding a small fraction of low-quality
clients based on the specified measurements. \system{} then performs secure
random selection within the refined pool. Although an adversarial server may
attempt to manipulate this process by excluding honest clients, the security
of \system{}'s random selection ensures that any advantage it gains is provably
small (\cref{sec:design_informed}).

We have implemented \system{} as a library\footnote{\system{} is available at \url{https://github.com/SamuelGong/Lotto}.} (\cref{sec:impl}) and evaluated its
performance with various FL tasks in a large EC2 cluster configured to
emulate a cross-device scenario (\cref{sec:eval}). Compared to insecure
selection algorithms, \system{} slightly increases the duration of each
training round by less than 10\% while achieving comparable or even better
time-to-accuracy performance and inducing negligible network cost.

%!TEX root = ../main.tex
\section{Background and Motivation}
\label{sec:background}

In this section, we begin by providing an overview of the standard FL
workflow (\cref{sec:background_fl}). We then discuss the privacy concerns and
prevalent privacy-preserving approaches that critically rely on an honest
majority among participants (\cref{sec:background_defense}). Following this,
we highlight the privacy vulnerabilities that arise when encountering a
dishonest majority (\cref{sec:background_loophole}) and ultimately present
the motivation for our work.

\subsection{Federated Learning}
\label{sec:background_fl}

Federated learning (FL)~\cite
{mcmahan2017communication,kairouz2019advances} emerges as a new private
computing paradigm that enables a large number of \emph{clients} to
collaboratively train a global model under the orchestration of a \emph
{server}. The standard FL workflow is an iterative process that comprises
three steps in each training round~\cite{bonawitz2019towards}. \textcircled
{\raisebox{-1.0pt}{1}} Participant Selection: the server samples a subset of
eligible clients from the entire \emph{population} as participants, where the
eligibility criteria may include factors such as charging status and whether being connected to an unmetered network. \textcircled{\raisebox{-1.0pt}
{2}} Local Training: the server distributes the global model to each
participant, who then trains the model using its private data to compute
a \emph{local update}. \textcircled{\raisebox{-1.0pt}{3}} Model Aggregation:
the server collects local updates from participants and computes an \emph
{aggregated update}. This step is often coupled with additional privacy-enhancing techniques (\cref{sec:background_defense}). The server then uses
the aggregated update to refine the global model. As participants only expose
their local updates instead of raw data to other parties, FL adheres to the
data minimization principle~\cite
{kairouz2019advances, cummings2023challenges} mandated by many privacy
regulations~\cite{voigt2017eu}. Consequently, FL has been deployed to
facilitate a broad range of edge intelligence applications~\cite
{granqvist2020improving, song2022flair, xu2023federated}.

\subsection{Enhancing Data Privacy in FL}
\label{sec:background_defense}

Despite the fact that client data is not directly revealed during the FL training process, extensive research has demonstrated that it is still possible to disclose sensitive data using individual updates or trained models.
Specifically, an adversary can reconstruct a client's training data from its local updates through \emph{data extraction}~\cite{geiping2020inverting, yin2021see, wang2022protect, zhao2023secure, yue2023gradient}.
For image classification tasks, even with a huge number of local gradient steps (e.g., 15k) and large models like ResNet-18~\cite{he2016deep}, state-of-the-art attackers~\cite{yue2023gradient} can successfully recover high-quality training images at both the pixel-level and semantic-level using a single local update.
Given only an aggregated update or a trained model, the adversary can also infer whether a client's private text has been used in the training set through \emph{membership inference}~\cite
{shokri2017membership, carlini2019secret, song2019auditing,
nasr2019comprehensive}.

To effectively control the exposure of clients' data
against these attacks, FL systems commonly employ two privacy-enhancing techniques: secure aggregation and distributed differential privacy.

\PHM{Secure Aggregation (SecAgg)~\cite{bonawitz2017practical}.}
SecAgg is a secure multi-party computation (SMPC) protocol that enables the
server to learn only the sum of participants' updates and nothing beyond that.
In SecAgg, each participant $i$ first generates two key pairs $(sk^1_i, pk^1_i)$, $(sk^2_i, pk^2_i)$, where the public keys $(pk_i^1, pk_i^2)$ are shared with others.
Each pair of participants $(i, j)$ then engage in a key agreement 
to derive a shared key $s_{i, j} = \texttt{KA.Agree}(sk^1_i, pk^1_j)$.
Each participant $i$ then uses a pseudorandom number generator to derive a pairwise mask, $\bm{p}_{i, j} = \texttt{PRNG}(s_{i, j})$, for each other participant $j$, and adds it to its update following $\bm{y}_i = \bm{x}_i - \sum_{j<i} \bm{p}_{i, j} + \sum_{j>i} \bm{p}_{i, j}$.
As a result, after all participants' masked updates are aggregated, 
the pairwise masks cancel out, i.e., $\sum \bm{y}_i = \sum \bm{x}_i$.

In case of client dropout, SecAgg removes the pairwise masks of the dropped
participants to ensure that the remaining pairwise masks can still cancel out
when aggregated. To achieve this, each participant $i$ secret-shares its key
$sk^1_i$ to each other participant $j$ using a $t$-out-of-$n$ secret sharing
scheme, where $t$ is the minimum number of participants required to recover
the shared key. Therefore, if participant $i$ later drops out, a sufficient number of remaining clients ($\geq t$) can help the server reconstruct its key $sk^1_i$,
thereby enabling the recovery of the missing pairwise masks.

A security concern remains in that the server can falsely claim
the dropout of participant $i$ to recover its pairwise masks: removing these masks,
the server can uncover participant $i$'s local update $x_i$.
To address this concern, SecAgg introduces an additional mask $\bm{q}_i = \texttt{PRNG}(b_i)$ where $b_i$ is independently generated by participant $i$ 
and also secret-shared with each other.
The modified masking process for participant $i$ becomes:
\begin{equation}
    	\bm{y}_i = \bm{x}_i  \underbrace{-\sum_{j<i} \bm{p}_{i, j} + \sum_{j>i} \bm{p}_{i, j}}_{A} + \underbrace{\bm{q}_i}_{B}.
    \label{eq:secagg_complete}
\end{equation}
By requiring that the server only requests secret shares of either $sk_i^1$ (to recover part $A$ if participant $i$ drops out) or $b_i$ (to recover part $B$ otherwise) from other remaining participants, SecAgg ensures that part $A$ and $B$ cannot be removed simultaneously, and the secrecy of $\bm{x}_i$ is preserved.
Subsequent work, such as SecAgg+~\cite{bell2020secure}, has been proposed to reduce the runtime overhead of SecAgg and increase its scalability.

\PHM{Distributed Differential Privacy (DP)~\cite{agarwal2021skellam,kairouz2021distributed}.}
Simply concealing individual local updates using SecAgg is insufficient as 
client data can still leak through the aggregated updates. Differential privacy~\cite
{cynthia2006differential, dwork2014algorithmic} can prevent this by
ensuring that no specific client's participation significantly increases the
likelihood of any observed aggregated update by potential adversaries. This
guarantee is captured by two parameters, $\epsilon$ and $\delta$. Given any
neighboring training datasets $D$ and $D'$ that differ only in the inclusion
of a single participant's data, an aggregation procedure $M$ is $
(\epsilon, \delta)$-differentially private if, for any given set of output
$R$, $\textrm{Pr}[M(D) \in R] \leq e^{\epsilon} \cdot \textrm{Pr}[M(D^
{'})\in R]+\delta$. Therefore, a change in a participant's contribution yields
at most a multiplicative change of $e^\epsilon$ in the probability of any
output $R$, except with probability $\delta$.

To achieve a privacy goal $(\epsilon_G, \delta_G)$ in an $R$-round training, 
each aggregated update needs to be perturbed by a random noise $r$ calibrated based on $\epsilon_G$, $\delta_G$, $R$, and the sensitivity of the aggregated update $\Delta$ (defined as its maximum change caused by the inclusion of a single client's data).
Given the use of SecAgg, the required level of noise can be achieved without relying on the (untrusted) server by having each participant $i$ add an even share of the noise $r_i$ to its local update, i.e.,

\begin{equation}
    \sum_{i} r_i
    \widesim[4]{(\epsilon_G, \delta_G)\textrm{-DP}}{R, \Delta}
     r.
    \label{eq:dp}
\end{equation}
This DP model is referred to as distributed DP in the literature~\cite{kairouz2019advances},
given its distributed fashion of noise addition.

\subsection{Privacy Loopholes in Common Defenses}~\label{sec:background_loophole}

\PHB{Strong Assumption on Honest Majority in Participants.}
The above privacy-enhancing techniques commonly assume that the \emph{majority} of participants are honest, relying on them to detect and/or prevent the server's misbehavior.
However, this assumption often proves inaccurate due to the necessity of participant selection in FL systems.
While the overall population may be extensive, FL usually involves a limited number of sampled clients at each training round, typically in the range of a few hundred~\cite{bonawitz2019towards,kairouz2019advances}.
This limitation arises because including additional clients beyond a certain threshold only yields marginal improvements in terms of convergence acceleration~\cite{mcmahan2017communication, wang2021field}.
Unfortunately, if the server acts maliciously, it can manipulate the selection process by intentionally selecting compromised clients that it colludes with, resulting in a majority of dishonest participants among those involved.
Without an honest majority among participants, these techniques encounter significant security vulnerabilities.

\PHM{Case \#1: SecAgg.}
SecAgg safeguards a participant $i$'s plaintext update $\bm{x}_i$ from server probing by distributing secret shares of $sk_i^1$ and $b_i$ among other participants and requiring them to disclose only the shares of \emph{either} secret to the server (\cref{sec:background_defense}).
However, this can not be achieved if a malicious server colludes with a sufficiently large fraction of participants.

Consider a scenario with $s$ participants, $x$ of which collude with the server.
Let $t$ denote the threshold used by the $t$-out-of-$n$ secret sharing scheme in SecAgg.
As depicted in Figure~\ref{fig:motivation_collusion_secagg}, when $x \geq 2t - s$, the server can identify two separate sets of honest participants, each containing at least $t - x$ members.
In this case, by announcing participant $i$'s contradictory dropout outcome to either set, the server can collect at least $t-x$ secret shares for both $sk_i^1$ and $b_i$.
Combining them with the shares provided by the $x$ colluding clients, the server can reconstruct $sk_i^1$ and $b_i$, simultaneously.
Ultimately, SecAgg fails to protect the local update of participant $i$, as implied by Equation~\eqref{eq:secagg_complete}.

\begin{figure}[t]
	\centering
	\begin{subfigure}[b]{0.49\columnwidth}
		\centering
		\includegraphics[width=\columnwidth]{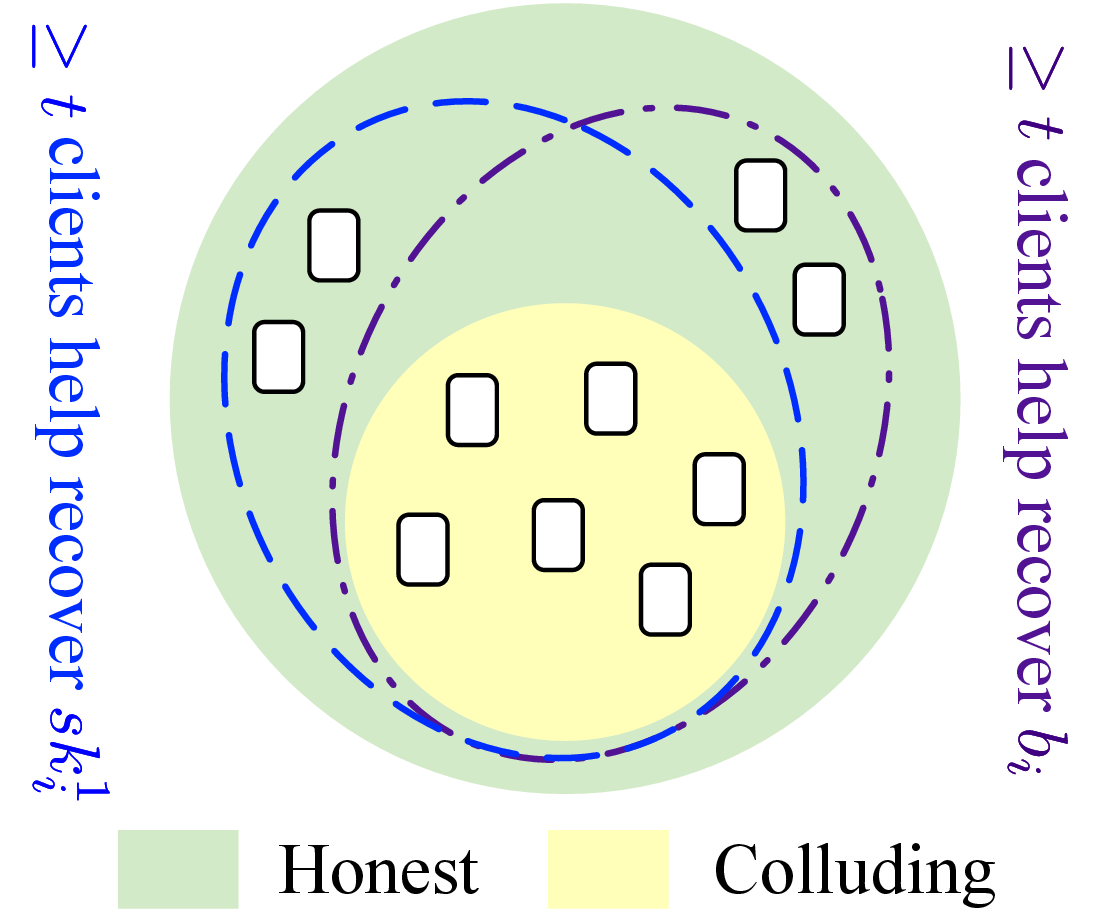}
		\caption{SecAgg.}
		\label{fig:motivation_collusion_secagg}
	\end{subfigure} \hfill
	\begin{subfigure}[b]{0.49\columnwidth}
		\centering
		\includegraphics[width=\columnwidth]{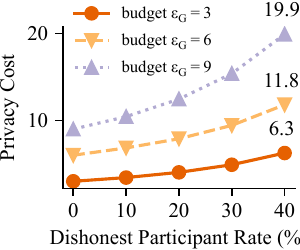}
		\caption{Distributed DP.}
		\label{fig:motivation_collusion_dp}
	\end{subfigure}
	\caption{Example impacts of dishonest majority.}
	\label{fig:motivation_collusion}
\end{figure}

\PHM{Case \#2: Distributed DP.}
Distributed DP, being built upon SecAgg, inherits the aforementioned security vulnerability.
When the number of dishonest participants surpasses a specific threshold ($2t - n$ as mentioned earlier), a malicious server can deduce the precise local update of participant $i$, which is perturbed solely by its noise share $r_i$ (but not $r$).

Even if this threshold is not exceeded, the security of distributed DP can still be impacted by dishonest participants.
If they choose not to add any noise to their local updates, the cumulative noise added to the aggregated update will \emph{fall below} the required level.
This results in increased data exposure, causing the system to exhaust its privacy budget more rapidly than planned.
To demonstrate this issue in a practical FL scenario, we conduct an analysis on a testbed where 700 clients collaborate to train a CNN model over the FEMNIST dataset for 50 rounds.
Each round involves 70 randomly selected clients, with the task of noise addition distributed equally among all participants. However, dishonest participants will abstain from adding any noise.
Figure~\ref{fig:motivation_collusion_dp} illustrates that as the fraction of dishonest participants increases (while remaining below the SecAgg's threshold mentioned earlier), the privacy deficit grows explosively, regardless of the prescribed privacy budget.
For example, when 40\% participants are dishonest, training with a global privacy budget $\epsilon_G=6$ ends up consuming an $\epsilon$ of 11.8 at the 50th round.

%!TEX root = ../main.tex
\section{\system{}: Secure Participant Selection}~\label{sec:design}

Taking the first stab at the mentioned issues, we present \system{}, a framework that secures participant selection in FL. 
We first formalize the participant selection problem (\cref{sec:design_problem}),  establish the threat model (\cref{sec:design_threat}), and briefly introduce the cryptographic primitives used (\cref{sec:design_primitives}).
Next, we describe how \system{} accomplishes our goal in two scenarios: random selection (\cref{sec:design_random}) and informed selection (\cref{sec:design_informed}).
We finally provide security (\cref{sec:design_security}) analysis for these methods.

\subsection{The Participant Selection Problem}
~\label{sec:design_problem}

In the FL literature, participant selection is commonly approached in two forms, both of which we aim to safeguard.

\PHM{Random Selection.} 
The Federated Averaging (FedAvg) algorithm~\cite{mcmahan2017communication}, considered the canonical form of FL, adapts local SGD~\cite{stich2018local, gorbunov2021local} on a randomly selected subset of clients during each round.
To achieve random selection, the server can perform \emph{subsampling without replacement}, where a subset of the population is chosen uniformly at random.
Alternatively, it can conduct \emph{Bernoulli sampling}, where each client is independently included with the same probability.
We focus on the former as it guarantees a fixed sample size.

\PHM{Informed Selection.}
Given the large heterogeneity in device speed and/or data characteristics, several FL advancements go beyond random sampling by employing informed selection algorithms~\cite{zhang2021client, nishio2019client, wang2020optimizing, chai2020tifl, lai2021oort, kim2021autofl}.
These algorithms leverage \emph{client-specific measurements} to identify and prioritize clients who have the greatest potential to rapidly improve the global model.
Among these measurements, some can be directly monitored by the server, e.g., response latencies, while the others rely on clients' self-reporting, e.g., training losses.

For example, Oort, the state-of-the-art informed algorithm, ranks each client $i$ with a utility score $U_i^{Oort}$ that jointly considers the client's data quality and system speed:

\begin{equation}
	U^{Oort}_i = \underbrace{|B_i| \sqrt{\frac{1}{|B_i|}\sum_{k \in B_i} Loss(k) ^2}}_{\mathrm{Data \  quality}} \times \underbrace{(\frac{T}{t_i})^{\mathds{1}(T<t_i) \times \alpha}}_{\mathrm{System \ speed}}.
	\label{eq:oort_score}
\end{equation}
The first component is the aggregate training loss that reflects the volume and distribution of the client's dataset $B_i$.
The second component compares the client's completion time $t_i$ with the developer-specified duration $T$ and penalizes any delay (which makes the indicator $\mathds{1}(T<t_i)$ outputs 1) with an exponent $\alpha > 0$.
Oort prioritizes the use of clients with high utility scores for enhanced training efficiency.

\PHM{Design Goal.}
Let $n$ represent the population size, with $c$ being the number of dishonest clients, and $s$ the target number of participants in a training round.
Ideally, regardless of the selection approach and the base rate of dishonest clients in the population $c/n$, \ul{the proportion of compromised participants $x/s$ should match $c/n$}, where $x$ represents the number of compromised participants.
For practical purposes, we allow for a slight relaxation of this ideal goal.
Specifically, our objective is to align $x/s$ with $\eta c/n$, where $\eta$ is a small value slightly greater than 1.

\subsection{Threat Model}
~\label{sec:design_threat}

\PHB{Cross-Device Scenarios with Server-Mediated Network.}
Each client has a \emph{private authenticated channel} established with the server, ensuring that their transmitted messages cannot be eavesdropped on by external parties.
We do not assume a peer-to-peer network among clients due to concerns regarding efficiency and scalability.
Instead, clients rely on the server to mediate communication between them.

\PHM{Public Key Infrastructure (PKI).}
We assume a PKI that facilitates the registration of client identities and the generation of cryptographic keys on their behalf.
During the setup phase, each client $i$ \emph{registers} its public key, denoted as $(i, pk_i^{reg}$), to a public bulletin board maintained by the PKI.
The bulletin board only accepts registrations from clients themselves, preventing malicious parties from impersonating honest clients.
Also, the PKI is responsible for \emph{generating} the required public/secret key pair for client $i$ to use in \system{}, $(pk_i^{\system{}}, sk_i^{\system{}})$.
The PKI encrypts the secret key $sk_i^{\system{}}$ using $pk_i^{reg}$ to securely transmit them to client $i$.
Meanwhile, the public key $pk_i^{\system{}}$ is registered on the public bulletin board.
The communication with the PKI is \emph{not} mediated by the server.

\PHM{Malicious Server with Colluding Clients in Population.}
We assume a malicious server colluding with a subset of clients in the population and they can \emph{arbitrarily deviate} from the protocol.
They may send incorrect or chosen messages to honest clients, abort or omit messages, or engage in Sybil attacks~\cite{douceur2002sybil} to increase the portion of clients under their control.
We do \emph{not} assume a specific fraction of dishonest clients within the population to provide fundamental security.
Even in the presence of a dishonest majority in the population, our objective remains to ensure that, among the selected clients, the proportion of the compromised ones aligns with the base rate of dishonest clients in the population (\cref{sec:design_problem}).
In other words, client selection cannot be manipulated.

\subsection{Cryptographic Primitives}
~\label{sec:design_primitives}

\system{} uses the following cryptographic primitives.

\PHM{Verifiable Random Function (VRF)~\cite{micali1999verifiable, dodis2005verifiable}.}
A VRF is a public-key PRF that provides proof that its outputs were computed correctly.
It consists of the following algorithms:

\begin{itemize}
	\item $(pk, sk) \xleftarrow{\$}$ \texttt{VRF.keygen}$(1^{\kappa})$.
	The key generation algorithm randomly samples a public/secret key pair $(pk, sk)$ from the security parameter $\kappa$.
	\item $(\beta, \pi) \leftarrow $\texttt{VRF.eval}$_{sk}(x)$.
	The evaluation function receives a binary string $x$ and a secret key $sk$ and generates a random string $\beta$ and the corresponding proof $\pi$.
	\item $\{0, 1\} \leftarrow $\texttt{VRF.ver}$(pk, x, \beta, \pi)$. The verification function receives a public key $pk$ and three binary strings $x$, $\beta$, and $\pi$, and outputs either 0 or 1.
\end{itemize}

A secure VRF should have unique provability and pseudorandomness.
The former requires that for every public key $pk$ and input $x$, there is a unique output $\beta$ for which a proof $\pi$ exists such that the verification function outputs 1.
The latter requires that no adversary can distinguish a VRF output without the accompanying proof from a random string.

\PHM{Signature Scheme.}
A signature scheme allows a message recipient to verify that the message came from a sender it knows.
It consists of the following algorithms:
\begin{itemize}
    \item $(pk, sk) \xleftarrow{\$}$ \texttt{SIG.gen}$(1^{\kappa}).$
    The key generation algorithm randomly samples a public/secret key pair $(pk, sk)$ from the security parameter $\kappa$.
    \item $\sigma \leftarrow $\texttt{SIG.sign}$_{sk}(m)$.
    The sign function receives a binary string $m$ and a secret key $sk$ and generates a signature $\sigma$.
    \item $\{0, 1\} \leftarrow $\texttt{SIG.ver}$(pk, m, \sigma)$. The verification function receives a public key $pk$ and two strings $m$ and $\sigma$, and outputs either 0 or 1.
\end{itemize}

For correctness, a signature scheme must ensure that the verification algorithm outputs 1 when presented with a correctly generated signature for any given message.
For security, we require that no PPT adversary, when given a fresh honestly generated public key and access to a signature oracle, can produce a valid signature on a message on which the oracle was queried with non-negligible probability.

\PHM{Pseudorandom Function (PRF)~\cite{goldreich1986construct}.}
A secure \texttt{PRF}$_k(x)$ is a family of deterministic functions indexed by a key $k$ that map an input $x$ into an output $y$ in such a way that $y$ is computationally indistinguishable from the output of a truly random function with input $x$.

\subsection{Secure Random Selection}
~\label{sec:design_random}

A faithful execution of random selection naturally achieves our goal (\cref{sec:design_problem}).
However, enforcing this in the presence of a malicious server poses two key challenges:

\begin{itemize}
    \item \emph{Correctness} (\textbf{C1}): A malicious server can deviate from random selection while falsely claiming adherence to the protocol.
    How can clients prevent such misbehavior?
    \item \emph{Consistency} (\textbf{C2}):
    Even if the protocol is faithfully executed, how can we ensure that the participants in the subsequent workflow are indeed those who were selected?
\end{itemize}

\PHM{Self-Sampling with Verifiable Randomness.}
To address \textbf{C1}, we propose a \emph{self-sampling} approach where each individual client in the population is responsible for determining whether it should participate in a training round based on publicly available inputs.
The intuition is that honest clients have a \emph{strong incentive} to follow a reasonable selection plan due to concerns about preserving their own privacy.
For security, we need to tackle the following issues simultaneously:

\begin{itemize}
	\item \emph{T1.1}: Ensuring that dishonest clients cannot participate without following the prescribed self-sampling protocol.
	\item \emph{T1.2}: Ensuring that honest clients who choose to participate do not get arbitrarily omitted by the malicious server.
\end{itemize}

To accomplish T1.1, \system{} introduce \emph{verifiable randomness} to the self-sampling process to allow each participant to verify the integrity of other peers' execution.
Specifically, we utilize a VRF with range $[0, m)$, and assume that the FL training needs to randomly sample $s$ out of $n$ clients for a specific training round $r$.
At the beginning of the round, the server first announces $s$, $n$, $r$ to all clients in the population.
Each client $i$ then generates a random number $\beta_i$ and the associated proof $\pi_i$ using its private key $sk_i^{\system{}}$ via $\beta_i, \pi_i = $ \texttt{VRF}$_{sk_i^{\system{}}}(r)$.
To achieve an expectation of $s$ sampled clients, only those clients with $\beta_i < sm/n$ will claim to participate by uploading their $\beta_i$'s and $\pi_i$'s to the server, while the remaining clients abort.
Next, the server takes all the surviving clients $P$ as participants and dispatches the collected $\{\beta_i, \pi_i\}_{i \in P}$ to each of them.
Finally, each participant $i$ verifies the validity and eligibility of its peers' randomness, i.e., checking if $\textrm{\texttt{VRF.val}}(pk_{j}^{\system{}}, r, \beta_j, \pi_j) = 1$ and $\beta_j < s m/n$ for each $j \in P$.
It aborts upon any failure, or refers to this set of participants throughout the remaining stages of the training round.

By pseudorandomness of VRFs (\cref{sec:design_primitives}), if each $sk_i^{\system{}}$ is uniformly random (guaranteed by the PKI as implied in~\cref{sec:design_threat}), this process is pseudorandom where each client is selected equally likely with probability $s/n$.
Also, the adversary cannot forge a random value that falls inside the range of $[0, sm/n)$ to help a dishonest client get selected.
By VRF's unique provability (\cref{sec:design_primitives}), if client $i$'s original randomness generated with $sk_i^{Lotto}$ and $r$ falls outside the range, no eligible proof can be produced for supporting any forged $\beta_i'$ to pass an honest client's verification test that takes $pk_i^{Lotto}$ and $r$ as inputs.

Regarding the round index $r$, it should be (i) unique for each round to avoid replay attacks, and (ii) consistently used by honest clients to preserve the sampling randomness.
\system{} achieves (i) by allowing each client to abort if the announced $r$ has been used previously, and also guarantees (ii) with a consistency check (as later detailed).
The above design, together with the key-pair integrity guaranteed by PKIs, ensures that each participant's randomness must be honestly generated.

\PHM{Over-Selection with Controlled Residual Removal.}
While VRFs securely achieve Bernoulli sampling, our primarily desired form of random selection is subsampling without replacement (\cref{sec:design_problem}).
To achieve a fixed sample size, \system{} further employs over-selection with residual removal based on the above process.
Specifically, instead of directly targeting $s$ sampled clients, \system{} slightly increases the expected sample size by a factor $\alpha>1$.
In this case, clients identify themselves as candidate participants $C$ when their associated $\beta_i$'s satisfy $\beta_i < \alpha sm/n$.
Next, the server samples $s$ candidates out of $C$ uniformly at random 
to finalize the set of participants $P$, if the total number of candidates is no less than the required sample size $s$ (the probability of this event is given by Theorem~\ref{thm:over_selection}) and aborts otherwise.
As we observed in practice (Figure~\ref{fig:over_selection}), $\alpha \geq 1.3$ suffices to sample enough clients with a high success rate (i.e., making Theorem~\ref{thm:over_selection}'s probability close to 1).
On receiving $P$, clients verify whether $\lvert P \rvert = s$ and abort if not.

\begin{figure}[t]
	\centering
	\includegraphics[width=0.9\columnwidth]{./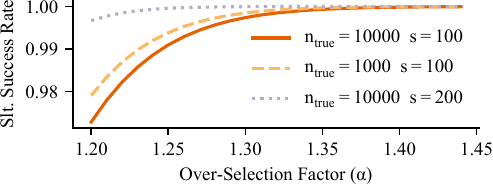}
	\caption{A small over-selection factor ($\alpha$) suffices in practice.}
	\label{fig:over_selection}
\end{figure}

\begin{theorem}[Effectiveness of Over-Selection]
	Using an over-selection factor $\alpha>0$, the described process in \system{} results in at least $s$ candidates with probability of $1 - \sum_{i = 0}^{s - 1} \binom{n_{true}}{i} p^i (1-p)^{n_{true} - i}$ where $p = \alpha s /n$ and $n_{true}$ and $n$ is the actual size of the population and that used by clients, respectively.
	\label{thm:over_selection}
\end{theorem}

In the above process, a malicious server can intentionally remove honest candidates, not just the ones with randomly chosen indices, to gain an advantage in growing a dishonest cohort.
Consider a scenario where there are $c$ dishonest clients among a population of size $n_{true}$, while the server announces a population size $n$ that may not necessarily align with $n_{true}$.
Suppose that the candidate set $C$ ($\lvert C \rvert \geq s$) contains $x$ dishonest clients after over-selection, then the expected proportion of them in $C$, $p_{cand}$, is equal to the base rate of those in the population, i.e., $p_{cand} = \mathbb{E} \left[ \frac{x}{\lvert C \rvert} \big\vert \lvert C \rvert \geq s \right] = c / n_{true}$.
Next, in the worst-case scenario where all the clients removed by the server are honest,  the expected proportion of dishonest clients in the final $s$ participants, $p_{final}$, is upper bounded as
\begin{equation*}
\begin{aligned}
	p_{final}
	&= \mathbb{E} \left[ \frac{x}{s} \bigg\vert \lvert C \rvert \geq s \right]
	= \frac{1}{s} \mathbb{E} \left[ x \big\vert \lvert C \rvert \geq s \right] \\
	&= \frac{1}{s} \cdot \frac{\mathbb{E} \left[ x \right] - \mathbb{E} \left[ x \big\vert \lvert C \rvert < s \right] \cdot \Pr \left[ \lvert C \rvert < s\right]}{\Pr  \left[ \lvert C \rvert \geq s\right]} \\
	&< \frac{1}{s} \cdot \frac{\mathbb{E} \left[ x \right]}{\Pr  \left[ \lvert C \rvert \geq s\right]}
	\approx \frac{1}{s} \mathbb{E} \left[ x \right] = \frac{1}{s} \cdot \frac{\alpha c s}{n} = \frac{\alpha c}{n}.
\end{aligned}
\end{equation*}
Thus, in expectation, the server can grow the proportion of dishonest clients by a factor approximately up to $\lambda = \frac{\alpha n_{true}}{n}$.

\begin{table*}[t]
	\footnotesize
	\begin{tabularx}{\linewidth}{X}
		\toprule
		\\
		\centerline{The \system{} Protocol for Secure Participant Selection} \\
		\vspace*{-0.2in}
		\begin{itemize}
			\item \textbf{Offline Setup}:
			\begin{itemize}
				\item[--]
				All parties are given the security parameter $\kappa$, the target number of sampled clients in a round $s$, the over-selection factor $\alpha$, a range $[0, m)$ to be used for \texttt{PRF} or \texttt{VRF},
				the population threshold $n_{min}$,
				and a timeout $l$.
				All clients also have a private authenticated channel with the server and the PKI, respectively.
				The server knows the exact population size $n$.
			\end{itemize}
			\(
			\textrm{\textit{Client} \ } i \left\{
			\begin{minipage}[c]{0.9\linewidth}
				\begin{itemize}
					\item[--]
					Generate a key pair $(pk_i^{reg}, sk_i^{reg}) \leftarrow \textrm{\texttt{KA.gen}}(pp)$ for signature and send the public key $pk_i^{reg}$ to the PKI for registration.
					\item[--]
					Receive the secret key for \system{}'s use, $sk_i^{\system{}}$, from the PKI.
					Send the public key $pk_i^{reg}$ to the server for registration. 
				\end{itemize}
			\end{minipage}
			\right.
			\)
			\(
			\textrm{\textit{Server} \ } \left\{
			\begin{minipage}[c]{0.9\linewidth}
				\begin{itemize}
					\item[--]
					Receive all clients' registration keys $\{pk_i^{reg}\}$.
					Use them to query the PKI on the application public keys $\{pk_i^{\system{}}\}$.
				\end{itemize}
			\end{minipage}
			\right.
			\)
			\item \informed{\textbf{Online Stage 0 (Population Refinement)}:} \newline
			\(
			\textrm{\textit{Server} \ } \left\{
			\begin{minipage}[c]{0.9\linewidth}
				\begin{itemize}
					\item[--] Based on specific criteria, select $n$ clients with the best utility from the population and take this set of clients as the new population.
				\end{itemize}
			\end{minipage}
			\right.
			\)
			\item \client{\textbf{Online Stage 1 (Client-Centric Selection)}}: \newline
			\(
			\textrm{\textit{Server} \ } \left\{
			\begin{minipage}[c]{0.9\linewidth}
				\begin{itemize}
					\item[--]
					Announce to all clients in the population the beginning of round $r$ and the population size $n$.
				\end{itemize}
			\end{minipage}
			\right.
			\)
			\(
			\textrm{\textit{Client} \ } i \left\{
			\begin{minipage}[c]{0.9\linewidth}
				\begin{itemize}
					\item[--]
					Upon receiving the population size $n$, verify that $n \geq n_{min}$. Abort if it fails.
					\item[--]
					Generate a random number $\beta_i, \pi_i = \textrm{\texttt{VRF.eval}}_{sk_i^{\system{}}}(r)$.
					Send to the server $(\beta_i, \pi_i)$ and claim to join only if $\beta_i < \alpha s m / n$.
				\end{itemize}
			\end{minipage}
			\right.
			\)
			\(
			\textrm{\textit{Server} \ } \left\{
			\begin{minipage}[c]{0.9\linewidth}
				\begin{itemize}
					\item[--]
					Insert responding clients into the candidate set $C$ and store their reported self-selection outcomes $\{(\beta_i, \pi_i)\}$ until timeout after $l$.
				\end{itemize}
			\end{minipage}
			\right.
			\)
			\item \server{\textbf{Online Stage 1 (Server-Centric Selection)}:} \newline
			\(
			\textrm{\textit{Server} \ } \left\{
			\begin{minipage}[c]{0.9\linewidth}
				\begin{itemize}
					\item[--]
					For each client $i$ in the population, compute $\beta_i = \textrm{\texttt{PRF}}_{pk_i^{\system{}}}(r)$ and insert the client into the candidate set $C$ only if $\beta_i < \alpha s m / n$.
				\end{itemize}
			\end{minipage}
			\right.
			\)
			\item \textbf{Online Stage 2 (Client Verification)}: \newline
			\(
			\textrm{\textit{Server} \ } \left\{
			\begin{minipage}[c]{0.9\linewidth}
				\begin{itemize}
					\item[--]
					If $\lvert C \rvert < s$, abort the current round $r$.
					Otherwise, select $s$ clients from $C$ uniformly at random to form the participant set $P \triangleq \{(i, pk_i^{reg}, \client{\beta_i, \pi_i)}\}$.
					Broadcast to all participants $i \in P$ this participant set $P$ \server{and the used population size $n$}.
				\end{itemize}
			\end{minipage}
			\right.
			\)
			\(
			\textrm{\textit{Client} \ } i \left\{
			\begin{minipage}[c]{0.9\linewidth}
				\begin{itemize}
					\item[--]
					Receive from the server a participant set $P_i$ and query the PKI on the public key $pk_j^{\system{}}$ for each participant $j \in P_i$ using $pk_j^{reg}$.
					\item[--]
					Verify whether $r$ has been used before, $(i, pk_i^{reg}, \client{\beta_i, \pi_i}) \in P_i$, $\lvert P_i \rvert = s$,
					\server{$n \geq n_{min}$},
					\client{$\beta_i < \alpha s m /n$, and  $\textrm{\texttt{VRF.val}}(pk_j, r, \beta_j, \pi_j) = 1$}
					\server{(or, $\textrm{\texttt{PRF}}_{pk_j^{\system{}}}(r) < \alpha s m / n$)} for $\forall j \in P_i$.
					\item[--]
					Abort if any of the test fails.
					Sign the observed participant list $P_i$ via $\sigma_i = \textrm{\texttt{SIG.sign}}_{sk_i^{reg}}(r \lvert \rvert P_i)$ and send the signature $\sigma_i$ to the server.
				\end{itemize}
			\end{minipage}
			\right.
			\)
			\item \textbf{Online Stage 3 (Consistency Check)}: \newline
			\(
			\textrm{\textit{Server} \ } \left\{
			\begin{minipage}[c]{0.9\linewidth}
				\begin{itemize}
					\item[--]
					Collect signatures from participants in $P' \subseteq P$ and abort in case of timeout after $l$.
					Broadcast $\{j, \sigma_j\}_{j \in P'}$  to each participant $i \in P'$.
				\end{itemize}
			\end{minipage}
			\right.
			\)
			\(
			\textrm{\textit{Client} \ } i \left\{
			\begin{minipage}[c]{0.9\linewidth}
				\begin{itemize}
					\item[--] Receive from the server the  set $\{j, \sigma_j\}_{j \in P'_i}$.
					\item[--] Verify $P'_i = P_i$ and  $\textrm{\texttt{SIG.ver}}(pk_j^{reg}, r \vert \vert P_i, \sigma_j) = 1$ for $\forall j \in P'_i$.  Abort on any failure, or refer to $P_i$ as participants at round $r$ hereafter.
				\end{itemize}
			\end{minipage}
			\right.
			\)
		\end{itemize} \\
		\bottomrule
		% \vspace*{-0.2in}
		\captionof{figure}{
			A detailed description of the \client{Client-Centric} and \server{Server-Centric} protocol for secure participant selection in \system{} (\cref{sec:design}).
			\informed{Italicized parts inside square brackets are additionally required for informed selection.}
		}
		\label{fig:algo}
	\end{tabularx}
	\vspace*{-0.2in}
\end{table*}

To ensure that $\lambda$ is reasonably small (T1.2),
\system{} fixes the use of $\alpha$ to be a small value around 1.3 since it suffices to sample enough clients as above mentioned.
\system{} also ensures that $n$ is large enough through a propose-acknowledge process.
Specifically, the server proposes only once the value of $n$ to a client during its check-in.
The client then compares $n$ with its intended number $n_{min}$, i.e., the population size necessary to prevent substantial inflation of the dishonest cohort from its standpoint, and proceeds with the federation only if $n \geq n_{min}$.

\PHM{Inter-Client Consistency via Signature Scheme.}
While we have ensured that each participant $i$ receives a selection outcome where each member appears with upper-bounded probability, the issue of guaranteeing identical outcomes among all participants remains.
To address this concern, \system{} employs a secure signature scheme, which operates as follows:

\begin{itemize}
	\item Each client signs its received selection outcome $P_i$ with its registration secret key $sk_i^{reg}$, generating a signature $\sigma_i =$ \texttt{SIG.sign}$_{sk_i^{reg}}(r \vert \vert P_i)$ where $\vert \vert$ denotes concatenation.
	\item After collecting all signatures from the set of all responding clients $P'$, the server broadcasts to them the set $\{(j, \sigma_j)\}_{j \in P'}$ containing the received signatures.
	\item Upon receiving $\{(j, \sigma_j)\}_{j \in P'_i}$, each client $i$ verifies the correctness of all signatures and ensures that the corresponding participant lists are consistent with its observed one, i.e., $P'_i = P_i$ and \texttt{SIG.ver}$(pk_j^{reg}, r \vert \vert P_i, \sigma_j)$ = 1 for $\forall j \in P'_i$. 
	If any failure occurs during verification, the client aborts.
\end{itemize}

In summary, by addressing \textbf{C1} with the above design, honest participants are ensured to proceed only with a consistent participant list $P_i'$ of size $s$ with each member randomly presenting with probability less than $\alpha s / n_{min}$.
This upper bound remains \emph{invariant} during the training, as it is fully determined by public values $\alpha$, $s$, and $n_{min}$.
Hence, this bound can be used to derive practical security guarantees  (\cref{sec:design_security}), regardless of misbehaviors exhibited by the server.

\PHM{Inherent Inter-Stage Consistency with SecAgg.}
Ensuring consistency in participant sets used by the server throughout the entire FL workflow (\textbf{C2}) poses inherent challenges.
Indeed, if the privacy-preserving protocol used in the later stage mandates that all participants observe a genuine selection outcome to ensure valid computation, \textbf{C2} is immediately addressed.
\emph{However}, if the server can secretly involve a participant set different from the announced one without being noticed by honest clients, the attainability of \textbf{C2} becomes an open problem (\cref{sec:discuss}), as honest participants cannot detect such deception with limited trusted source of information (\cref{sec:design_threat}).

Fortunately, the protocol commonly employed in the remaining workflow is SecAgg (or the distributed differential privacy built upon it) (\cref{sec:background_defense}), where the server has an incentive to uphold the actual selection outcome during the entire training.
\emph{First}, it gains no advantage by secretly involving additional dishonest clients.
In SecAgg integrated with \system{}, an honest participant $i$ establishes secure channels with peer $j$ solely by using the public key $pk_j^{reg}$ contained in the selection outcome $P_i$ finalized by \system{}.
Therefore, it will not share any additional secrets with clients outside of $P_i$, even if they participate in the halfway.
\emph{Second}, the server will not pretend arbitrarily many dropout events of selected honest participants, e.g., by omitting their messages.
This is because SecAgg enforces a minimum number of participants (i.e., $t$) remaining in the final stage, beyond which the aggregation will abort without providing any meaningful information.

\begin{table}[t]
	\centering
	\caption{\system{}'s \textbf{c}om\textbf{p}utation and \textbf{c}om\textbf{m}unication complexity.}
	\label{tab:complexity}
	\resizebox{\columnwidth}{!}{%
		\begin{tabular}{@{}c|cccc|cccc@{}}
			\toprule
			\multirow{4}{*}{Variant} & \multicolumn{4}{c|}{Client-Centric}                                    & \multicolumn{4}{c}{Server-Centric}                                   \\ \cmidrule(l){2-9} 
			& \multicolumn{2}{c|}{Server}         & \multicolumn{2}{c|}{Participant} & \multicolumn{2}{c|}{Server}        & \multicolumn{2}{c}{Participant} \\ \cmidrule(l){2-9} 
			& cp  & \multicolumn{1}{c|}{cm} &  cp           & cm          & cp & \multicolumn{1}{c|}{cm} & cp          & cm          \\ \midrule
			Random &
			$O(1)$ &
			\multicolumn{1}{c|}{\multirow{2}{*}{$O(n+s^2)$}} &
			\multicolumn{2}{c|}{\multirow{2}{*}{$O(s)$}} &
			\multirow{2}{*}{$O(n)$} &
			\multicolumn{1}{c|}{\multirow{2}{*}{$O(s^2)$}} &
			\multicolumn{2}{c}{\multirow{2}{*}{$O(s)$}} \\
			Informed                 & $O(n)$ & \multicolumn{1}{c|}{}      & \multicolumn{2}{c|}{}            &       & \multicolumn{1}{c|}{}      & \multicolumn{2}{c}{}            \\ \bottomrule
		\end{tabular}%
	}
\end{table}

\PHM{Full Protocol: Client-Centric v.s. Server-Centric.}
Figure~\ref{fig:algo} summarizes \system{}'s overall design.
We refer to this algorithm as \texttt{Client-Centric} since the selection primarily occurs at the client end.
\texttt{Client-Centric} effectively achieves the security goal outlined in~\cref{sec:design_problem}.
We provide a detailed analysis of its security at~\cref{sec:design_security} and its complexity in Table~\ref{tab:complexity}.

It is worth noting that there exists an alternative design that may initially appear to be a natural choice.
This design retains the core elements of \texttt{Client-Centric} but relies on the server to perform the random selection.
Specifically, for each client $i$ in the population, the server uses a PRF with range $[0, m)$ to generate a random value $\beta_i = \textrm{\texttt{PRF}}_{pk_i^{\system{}}}(r)$, and selects it as a participant if $\beta_i < \alpha s m / n$.
The selection outcome, along with the associated randomness, is communicated to each participant for verification purposes.
We refer to this design as \texttt{Server-Centric}.
Intuitively, it offers the same security as \texttt{Client-Centric} does in a \emph{single training round}, as it also provides honest clients with a consistent participant list with members presenting with upper-bounded probability.

However, \texttt{Server-Centric} faces a unique security challenge during \emph{multiple-round training}.
As the selection outcome relies solely on public inputs, a malicious server can accurately predict, before online training, whether an honest participant $i$ will be selected in any round $r$.
This opens up opportunities for the adversary to exploit the system.
For instance, during residual removal after over-selection, the server can deliberately preserve the predicted most frequently appearing candidate.
In the context of DP training, this coordination behavior implies a weaker level of privacy.
We defer a rigorous assessment to Appendix~\ref{sec:appendix_centric}.
Considering the similar runtime cost of the two designs (\cref{sec:eval_efficiency}), we adopt \texttt{Client-Centric} as the first choice for its stronger security.

\subsection{Secure Informed Selection}
~\label{sec:design_informed}

We first note that the server, despite the potential of being malicious, is unlikely to tamper with the model utility.
This is because the server invests computing power in training and gains no advantage by doing otherwise.
If, however, it is indeed irrational, the need for informed selection becomes unnecessary, and we can safely fall back to using \system{}‘s secure random selection.
In both cases, \system{} aims to prevent a malicious server from manipulating the selection process for privacy attacks (\cref{sec:design_problem}).

\PHM{Challenges.}
Unlike random selection,  informed selection does not inherently incorporate randomness, but heavily relies on client measurements (\cref{sec:design_problem}).
This presents two challenges in our pursuit of controlling the number of dishonest participants.
Consider Oort~\cite{lai2021oort} as an example, where a client's response latency and training loss are employed as proxies for data quality and system speed, respectively (\cref{sec:design_problem}).
\emph{First}, the adversary can manipulate these metrics for selecting more colluding clients.
Lacking a ground truth view on these metrics, detecting such deception is hard for honest clients.
\emph{Second}, even if adversaries refrain from actively corrupting any measurements, they can still involve more dishonest clients by genuinely improving their data quality or system speed.
Such exploitation is more covert to detect.

\PHM{Approximation with Introduced Randomness.}
To side-step the above issues, \system{} introduces randomness to the selection process by incorporating secure random selection (\cref{sec:design_random}) to \emph{approximate} a given informed algorithm.
The server first \emph{refines} the population by excluding a fraction $d$ of clients with the lowest utility, based on the original criteria employed in the algorithm.
It then applies secure random selection to the refined population.
For example, if the original algorithm prioritizes the fastest clients, \system{} approximates it by first excluding the slowest $d$ of clients before random selection.

\PHM{Effectiveness of Population Refinement.}
Consider the following single-metric scenario without loss of generality: the original algorithm selects the best $s$ clients out of the initial population of size $n_{init}$ based on a metric $X$ (Case A).
When employing \system{}, $s$ clients are randomly selected from the refined population of size $n = n_{init}(1 - d)$ (Case B).

\begin{figure}[t]
	\centering
	\begin{subfigure}[b]{1.0\columnwidth}
		\centering
		\includegraphics[width=\columnwidth]{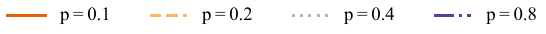}
		\label{fig:long_tailed_legend}
		\vspace{-1.07\baselineskip}
	\end{subfigure}
	\begin{subfigure}[b]{0.48\columnwidth}
		\centering
		\includegraphics[width=\columnwidth]{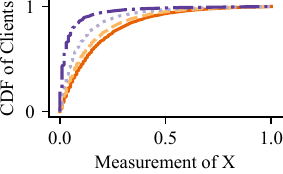}
		\caption{$X$ follows Zipf's distribution.}
		\label{fig:long_tailed_dist}
	\end{subfigure} \hfill
	\begin{subfigure}[b]{0.48\columnwidth}
		\centering
		\includegraphics[width=\columnwidth]{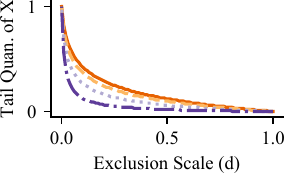}
		\caption{Effectiveness of refinement.}
		\label{fig:long_tailed_outcome}
	\end{subfigure}
	\caption{\system{} excels in approximating the original algorithm when the majority of the initial population is ``good''.}
	\label{fig:long_tailed}
\end{figure}

The extent to which \system{} approximates the original algorithm relies on the disparity in participant quality, as measured by $X$, between Case A and Case B.
This difference, in turn, depends on the distribution of $X$ across the initial population.
Intuitively, \system{} best approximates the original algorithm when $X$ follows a long-tailed distribution (e.g., power-law distributions~\cite{clauset2009power}), where the majority of the population is considered ``good" and only a small fraction is deemed ``bad''.
In this case, \system{} can exclude almost all the relatively few ``bad'' participants.
For example, suppose that $X$ is the smaller, the better, and it follows Zipf's distribution parameterized by $p$, such that the $X$ value of the $i$-th worst client is proportional to $i^{-p}$.
In this case, precluding the worst $d$ of the initial population enhances the quality of the worst client in the refined population by a factor of $(1-d)^{-p}$, as shown in Figure~\ref{fig:long_tailed}.

At the other extreme, there could be a scenario where the majority of clients are ``bad'' while only the minority are ``good.''
In this case, our approximation may not yield participants of quality comparable to the original algorithm.
We acknowledge this limitation yet contend that this compromise is a necessary cost in order to secure the selection process.

\PHM{Privacy-Utility Tradeoff under a Malicious Server.}
While excluding more ``bad'' clients helps create a better-refined population, it may also exclude more honest clients.
When the server behaves honestly, this does not alter the base rate of dishonest clients in the refined population $c/n$, since they are equally excluded.
However, a malicious server can grow this rate by intentionally excluding honest clients and/or more clients than planned.
To limit the impact of such misbehaviors, \system{} first enables the enforcement of a specific $d$ by having participants check the size of the refined population $n$ against the minimum acceptable one $n_{min}$ (\cref{sec:design_random}).

Next, the selection of $d$ involves a tradeoff between privacy and utility, where a smaller $d$ provides enhanced privacy but also reduces the potential utility.
We recommend utilizing the maximum permissible exclusion scale to attain a desired base rate of dishonest clients in the refined population, while minimizing the compromise on utility, as illustrated in Figure~\ref{fig:exclusion_scale}.
For instance, if $5\%$ of the initial client population consists of dishonest clients, using an exclusion scale no larger than 75\% can ensure that the base rate of dishonest clients in the refined population does not exceed $20\%$ in the worst case.

\begin{figure}[t]
	\centering
	\includegraphics[width=1.0\columnwidth]{./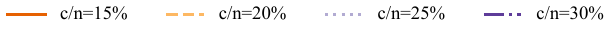}
	\includegraphics[width=1.0\columnwidth]{./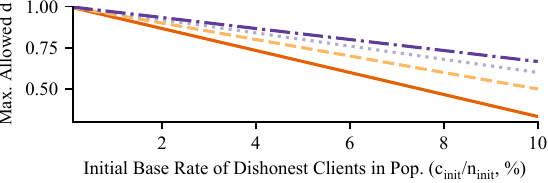}
	\caption{The maximum exclusion scale $d$ that allows to achieve a target base rate of dishonest clients in the refined population ($c/n$) given various initial base rates ($c_{init}/n_{init}$).}
	\label{fig:exclusion_scale}
\end{figure}

\PHM{Handling Multiple-Metric Algorithms.}
In the cases where the original algorithm incorporates two or more independent metrics, a straightforward approach to conducting population refinement is the \texttt{Or} strategy which excludes the worst clients based on one of the metrics.
Alternatively, we can also adopt the \texttt{And} strategy that excludes clients that perform poorly in all metrics simultaneously.
Furthermore, if the original informed algorithm combines the information from the two metrics into a single comprehensive metric, such as what Oort defines as the utility score (\cref{sec:design_problem}), there further exists a \texttt{Joint} strategy which directly relies on this single metric for population refinement.
We advocate the adoption of \texttt{Or} due to its superiority demonstrated in empirical experiments (\cref{sec:eval_approximation}).

\begin{figure*}[t]
	\centering
	\begin{subfigure}[b]{1.0\linewidth}
		\centering
		\includegraphics[width=1.0\linewidth]{./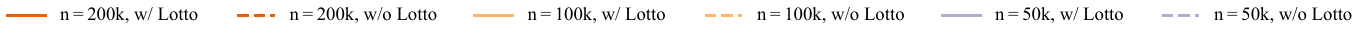}
	\end{subfigure} \newline
	\begin{subfigure}[b]{0.32\linewidth}
		\centering
		\includegraphics[width=\columnwidth]{./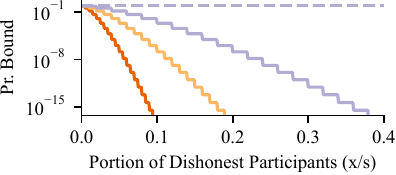}
		\caption{Direct implication.}
		\label{fig:thm_2_example}
	\end{subfigure} \hfill
	\begin{subfigure}[b]{0.32\linewidth}
		\centering
		\includegraphics[width=\columnwidth]{./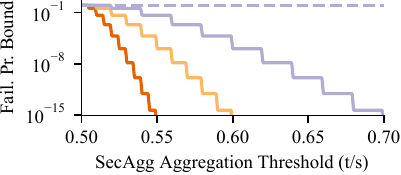}
		\caption{End-to-end impact with SecAgg.}
		\label{fig:cor_1_secagg}
	\end{subfigure} \hfill
	\begin{subfigure}[b]{0.32\linewidth}
		\centering
		\includegraphics[width=\columnwidth]{./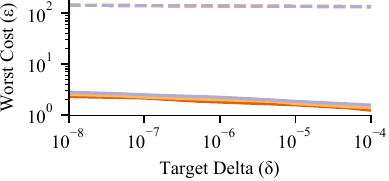}
		\caption{End-to-end impact with distributed DP.}
		\label{fig:cor_2_ddp}
	\end{subfigure}
	\caption{\system{}'s random selection effectively prevents arbitrary manipulation by the server in practical settings (\cref{sec:design_security}).}
	\label{fig:security_random}
\end{figure*}

\subsection{Security Analysis}
~\label{sec:design_security}

\PHB{Secure Random Selection.}
Our primary objective is to limit the number of dishonest participants (\cref{sec:design_problem}).
To assess how well \system{}'s random selection achieves it, we focus on the properties exhibited by the finalized participant list $P$.
Consider a population of size $n$, with $c$ clients being dishonest.
We denote by $s$ the desired number of sampled clients, and $x$ the number of clients that an adversary could manipulate among the selected participants.
Let $n_{min}$ be the population threshold, $m$ the randomness range, and $\alpha$ the over-selection factor.
Theorem~\ref{thm:security_random} holds despite the adversary's misbehavior.

\begin{theorem}[Security of Random Selection]
	If the protocol proceeds without abortion, for any $\eta > 1$, the probability that the proportion of dishonest participants, i.e., $x/s$, exceeds that in the population, i.e., $c/n$, is upper bounded as
		\begin{equation*}
		\Pr [\frac{x}{s} > \eta \frac{c}{n}] \leq 1 - \sum_{i=0}^{\lfloor \eta  c s / n \rfloor} \binom{c}{i}\left(\frac{1}{m} \lfloor \frac{\alpha sm}{n_{min}} \rfloor\right)^i \left(1 - \frac{1}{m} \lfloor \frac{\alpha sm}{n_{min}} \rfloor\right)^{c - i}.
		\end{equation*}
	\label{thm:security_random}
\end{theorem}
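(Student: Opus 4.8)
The plan is to reduce the statement to a Binomial tail bound: I will show that, conditioned on no honest party aborting, the dishonest participants form a subset of a fixed set $Q$ of dishonest clients whose VRF output happens to land in the acceptance window, and that $|Q|$ is Binomially distributed with parameters the adversary cannot influence. Concretely, set $Q := \{\, i \text{ dishonest} : \texttt{VRF.eval}_{sk_i^{\system{}}}(r) < \alpha s m / n_{min}\,\}$, and take any dishonest participant $j \in P$. Since no honest participant aborted in Stage~2, its verification of the broadcast $(P, n)$ succeeded: it checked $n \ge n_{min}$, queried the PKI for $pk_j^{\system{}}$, and checked both $\beta_j < \alpha s m / n$ and $\texttt{VRF.val}(pk_j^{\system{}}, r, \beta_j, \pi_j) = 1$. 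By unique provability of the VRF, and because $pk_j^{\system{}}$ is exactly the key the PKI issued to $j$, the value $\beta_j$ appearing in $P$ must equal the honestly evaluated $\texttt{VRF.eval}_{sk_j^{\system{}}}(r)$ --- the adversary cannot swap in a conveniently small value carrying a valid proof. Together with $\beta_j < \alpha s m / n \le \alpha s m / n_{min}$, this places $j$ in $Q$, so $x \le |Q|$. This holds no matter how the malicious server picked the $s$ participants out of the candidate pool $C$, which is precisely the payoff of over-selection with indiscriminate removal (\cref{sec:design_random}): discarding honest candidates cannot manufacture dishonest ones.

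Next I would establish $|Q| \sim \mathrm{Binomial}(c, p)$ with $p = \frac{1}{m}\lfloor \frac{\alpha s m}{n_{min}}\rfloor$. By the PKI assumption, each $sk_i^{\system{}}$ --- including those handed to dishonest clients --- is sampled uniformly and independently, and by the security of the VRF the outputs $\{\texttt{VRF.eval}_{sk_i^{\system{}}}(r)\}_{i\ \mathrm{dishonest}}$ are indistinguishable from i.i.d.\ uniform over $[0, m)$, so up to a negligible additive term they may be treated as exactly i.i.d.\ uniform; each dishonest client then falls below $\alpha s m / n_{min}$ independently with probability $p$. Since $x$ and $|Q|$ are integer-valued, $\{x > \eta s\} \subseteq \{|Q| > \lfloor \eta s \rfloor\}$, hence
\[
\Pr[x > \eta s] \;\le\; \Pr\big[\,|Q| > \lfloor \eta s \rfloor\,\big] \;=\; 1 - \sum_{i=0}^{\lfloor \eta s\rfloor} \binom{c}{i}\, p^{i}\,(1-p)^{c-i},
\]
which is the claimed bound.

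The step I expect to be the main obstacle is $x \le |Q|$. It rests on three points that need to be spelled out: unique provability (to forbid a dishonest client exhibiting an in-range $\beta_j$ with a valid proof while its true output is out of range); the verification semantics, which must guarantee that \emph{some} honest participant actually runs the range and VRF checks over \emph{every} entry of $P$, not just its own; and using that honest verifier's own $n \ge n_{min}$ test to replace the adversarially announced $n$ with the fixed constant $n_{min}$, which is what keeps $Q$ (and hence $p$) free of adversarial influence. The boundary case where $P$ contains no honest participant should be handled separately --- it forces $c \ge s$, which lies outside the honest-population regime the system targets. The remaining loose ends are routine: the computational-to-statistical slack from VRF security contributes a standard negligible term, and counting the integer VRF outputs strictly below $\alpha s m / n_{min}$ is what produces the floor in $p$.
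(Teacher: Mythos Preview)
Your proposal is correct and takes the same route as the paper: both reduce to a Binomial tail by arguing that Stage~2 verification forces every dishonest member of $P$ to carry a genuine in-range VRF output, so $x \le |Q|$ with $|Q| \sim \mathrm{Binomial}(c,p)$ and $p = \tfrac{1}{m}\lfloor \alpha s m / n_{min}\rfloor$. The paper's proof compresses all of this into three sentences, writing ``by client verification and consistency check \ldots\ $p = \tfrac{1}{m}\lfloor \alpha s m / n\rfloor \le \tfrac{1}{m}\lfloor \alpha s m / n_{min}\rfloor$'' exactly where you spell out unique provability, the PKI key-binding, and the honest verifier's $n \ge n_{min}$ check.

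One small correction on the boundary case you flag: your dismissal that $c \ge s$ ``lies outside the honest-population regime'' is not accurate --- the paper's own running example in \cref{sec:design_security} has $c = 10^3$ and $s \le 200$, so $c \ge s$ is squarely in scope. The right resolution is semantic rather than parametric: the list $P$ in the theorem is the one finalized from the viewpoint of honest participants (Stage~3 enforces that all honest participants agree on it), and when $P$ contains no honest client there is no honest update entering the subsequent SecAgg, so the privacy claims of Corollaries~\ref{cor:security_secagg} and~\ref{cor:security_ddp} hold vacuously in that round. The paper's proof does not address this edge case either.
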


Intuitively, the larger the value of $\eta$, the smaller the right-hand side of the inequality mentioned above.
This clearly demonstrates the effectiveness of \system{} as a secure selection framework for achieving the desired property (\cref{sec:design_problem}).
While the proof is deferred to Appendix~\ref{sec:appendix_security}, the significance of Theorem~\ref{thm:security_random} in a practical scenario is visualized in Figure~\ref{fig:thm_2_example}.
Here, we vary the population size $n$, while the number of colluding clients in the population set as $c = 10^3$, and the target number of participants as $s = n/1000$.
We consider $\alpha=1.3$, $m=2^{256}$, and $n_{min} = n$.
When the population size $n=2\times10^5$, employing insecure random selection results in a probability of 1 for the event where the fraction of dishonest participants exceeds 5\% (i.e., $\eta c/n$ with $\eta = 10$).
However, with \system{}, this probability becomes negligible, approximately $1.3 \times 10^{-7}$.

We further use Theorem~\ref{thm:security_random} to analyze the security of SecAgg and distributed DP (\cref{sec:background_defense}) when protected by \system{}.
The related proof can also be found in Appendix \ref{sec:appendix_security}.
 
 \begin{corollary}[Bounded Failure Probability in SecAgg~\cite{bonawitz2017practical}]
 	Let $t$ be the aggregation threshold of SecAgg (\cref{sec:background_defense}).
 	With \system{}'s random selection, the probability of the server being able to observe an individual update of an honest client, which indicates a failure of SecAgg, is upper bounded as
 	\begin{equation*}
 		\Pr [Fail] \leq 1 - \sum_{i=0}^{2t- s - 1} \binom{c}{i}\left(\frac{1}{m} \lfloor \frac{\alpha sm}{n_{min}} \rfloor\right)^i \left(1 - \frac{1}{m} \lfloor \frac{\alpha sm}{n_{min}} \rfloor\right)^{c - i}.
 	\end{equation*}
 	\label{cor:security_secagg}
 \end{corollary}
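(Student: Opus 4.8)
\textbf{Proof proposal for Corollary~\ref{cor:security_secagg}.}
The plan is to reduce the corollary to a direct instantiation of Theorem~\ref{thm:security_random}, so that essentially no new probabilistic argument is required. First I would recall the combinatorial characterization of a SecAgg failure from \cref{sec:background_loophole}: with $s$ participants, secret-sharing threshold $t$, and $x$ participants colluding with the server, the server can reconstruct an honest participant $i$'s individual update exactly when it can partition the honest participants other than $i$ into two disjoint sets $S_1,S_2$ with $|S_1|,|S_2|\geq t-x$, so that combining each set's shares with the $x$ colluding shares reaches the threshold $t$ for both $sk_i^1$ and $b_i$. Counting the available honest participants, this is possible precisely when $x > 2t - s$. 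Hence the event $\mathit{Fail}$ coincides with $\{\, x > 2t - s \,\}$, and $\Pr[\mathit{Fail}] = \Pr[\, x > 2t - s \,]$.

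Next I would invoke Theorem~\ref{thm:security_random} with the choice $\eta := (2t-s)/s$. In the regime where SecAgg is meaningful one has $s/2 < t < s$ (an honest majority among participants is exactly what SecAgg needs and what \system{} restores), so $\eta \in (0,1)$ and the theorem applies; the degenerate case $t=s$ is trivial since SecAgg then never leaks and the bound holds vacuously. Because $s$ and $t$ are integers, $\eta s = 2t - s$ is an integer, so $\lfloor \eta s \rfloor = 2t - s = \lfloor 2t - s \rfloor$, and substituting into the right-hand side of Theorem~\ref{thm:security_random} yields verbatim the inequality claimed in Corollary~\ref{cor:security_secagg}. Throughout I would keep the standing hypothesis of Theorem~\ref{thm:security_random} that the protocol runs to completion without abortion, so that a genuine size-$s$ participant list with the stated per-member inclusion-probability upper bound $\tfrac1m\lfloor\tfrac{\alpha s m}{n_{min}}\rfloor$ exists and $x$ is well defined.

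The heavy lifting — that $x$ is stochastically dominated by a $\mathrm{Binomial}\!\bigl(c,\tfrac1m\lfloor\tfrac{\alpha s m}{n_{min}}\rfloor\bigr)$ variable no matter how the server deviates (which honest candidates it removes, misreporting $n$, omitting messages, etc.) — is already discharged inside the proof of Theorem~\ref{thm:security_random}, so the corollary contributes nothing new on that front. The one place where care is genuinely needed, and the main obstacle, is the combinatorial reduction: one must pin down both the threshold ($2t-s$) and the strictness of the inequality, since an off-by-one there changes whether the tail sum should run to $2t-s$ or to $2t-s-1$, and only the former reproduces the stated bound. I would therefore double-check the SecAgg reconstruction condition against the construction in \cref{sec:background_defense} — in particular whether participant $i$ is counted among its own share-holders, and whether the condition is phrased in terms of $s$ or of the number of surviving participants — before fixing the final index range.
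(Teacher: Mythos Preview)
Your proposal is correct and follows exactly the paper's own approach: the paper's proof simply cites the SecAgg security condition (secrecy holds iff $x \le 2t - s$, per Theorem~6.5 of~\cite{bonawitz2017practical}) and then substitutes $\eta s = 2t - s$ into Theorem~\ref{thm:security_random}. Your write-up is in fact more careful than the paper's two-line proof, since you explicitly work through the share-counting to pin down the threshold and its strictness---a worthwhile check given that \cref{sec:background_loophole} phrases the condition as $x \ge 2t - s$ while the appendix proof uses $x \le 2t - s$ for secrecy.
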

 
\begin{corollary}[Controlled Privacy Cost in Distributed DP, I]
 	Let $\pi$ be the distributed DP protocol parameterized by $\sigma$ and built atop SecAgg with aggregation threshold $t$.
 	Given target $\delta$, an $R$-round FL training with $\pi$ and \system{}'s  random selection (\texttt{Client-Centric}) achieves $(\epsilon, \delta)$-DP, where
 	\begin{equation*}
 		\begin{aligned}
 			\epsilon &= \min_{\substack{0 \leq k \leq \min \{c, 2t - s - 1\} \\ 0 \leq r \leq R \\ p_k + q_r < \delta}} E_\pi(s, k, r, \sigma, 1 - \frac{1-\delta}{1 - p_k - q_r}),\\
 			p_k &= 1 - \left( \sum_{i=0}^k \binom{c}{i} \left(\frac{1}{m} \lfloor \frac{\alpha s m}{n_{min}} \rfloor \right)^i \left(1 - \frac{1}{m}\lfloor \frac{\alpha s m}{n_{min}} \rfloor \right)^{c - i} \right)^R,
 		\end{aligned}
 	\end{equation*}
 	with $\sigma$ the noise multiplier used in $\pi$, and $E_\pi(\cdot)$ the privacy accounting method of $\pi$.
 	In addition, $q_r = \phi_{R, r}$ with the recurrence definition of $\phi$ being $\phi_{j, r} = 1 - \gamma(j, r) + \gamma(j, r) \phi_{j - 1, r}$ with boundary condition $\phi_{r, r} = 0$ and $\gamma(j, r) = $
 	\begin{equation*}
 		\left( \sum_{i=0}^{r - 1} \binom{j - 1}{i} \left(\frac{1}{m} \lfloor \frac{\alpha s m}{n_{min}} \rfloor \right)^i \left(1 - \frac{1}{m}\lfloor \frac{\alpha s m}{n_{min}} \rfloor \right)^{j - 1 - i} \right)^s.
 	\end{equation*}
	\label{cor:security_ddp}
\end{corollary}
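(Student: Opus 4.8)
The plan is to reduce the end-to-end guarantee to the off-the-shelf privacy accountant $E_\pi$ of the distributed DP mechanism, after carving out two ``bad'' events that depend only on the VRF randomness and the server's (adversarial but \emph{data-independent}) choices, never on any honest client's data. Fix a feasible pair $(k,r)$ with $0\le k\le\min\{c,2t-s\}$ and $0\le r\le R$, and write $\rho = \tfrac1m\lfloor\tfrac{\alpha s m}{n_{min}}\rfloor$ for the per-round upper bound on any client's selection probability (\cref{sec:design_random}). Let $\mathcal{B}_1$ be the event that in some round more than $k$ participants are dishonest, let $\mathcal{B}_2$ be the event that some honest client is selected in more than $r$ of the $R$ rounds, put $E=\overline{\mathcal{B}_1}\cap\overline{\mathcal{B}_2}$ and $\beta=\Pr[\bar E]\le\Pr[\mathcal{B}_1]+\Pr[\mathcal{B}_2]$. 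On $E$, every round has at most $k\le 2t-s$ dishonest participants, so SecAgg's masking of honest updates is not broken (\ref{cor:security_secagg}, \cref{sec:background_defense}) and each round's aggregate is still perturbed by the noise shares of the $\ge s-k$ honest participants; moreover each honest client participates at most $r$ times, and since the malicious server observes which clients participate there is no amplification by subsampling, so an honest client's privacy loss accumulates over at most $r$ per-round mechanisms. This is exactly the regime the accountant $E_\pi(s,k,r,\sigma,\cdot)$ is built for, so conditioned on $E$ the protocol is $(\epsilon,\delta')$-DP with $\epsilon=E_\pi(s,k,r,\sigma,\delta')$ for every $\delta'\ge 0$.

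Next I would lift this to an unconditional guarantee. Because $\Pr[E]$ is the same for any two neighboring datasets $D,D'$, splitting $\Pr[M(D)\in S]$ over $E$ and $\bar E$ and using $\Pr[M(D')\in S\mid E]\,\Pr[E]\le\Pr[M(D')\in S]$ gives $\Pr[M(D)\in S]\le e^{\epsilon}\Pr[M(D')\in S]+(1-\beta)\delta'+\beta$. Equating $(1-\beta)\delta'+\beta$ with the target $\delta$ and solving yields $\delta'=1-\tfrac{1-\delta}{1-\beta}$, which is exactly the last argument passed to $E_\pi$ in the statement; optimizing over all feasible $(k,r)$ with $\beta<\delta$ then produces the claimed $\epsilon$. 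What remains is to show $\Pr[\mathcal{B}_1]\le p_k$ and $\Pr[\mathcal{B}_2]\le q_r$.

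For $\mathcal{B}_1$: applying \ref{thm:security_random} with $\eta s=k$, in any single round the number of dishonest participants exceeds $k$ with probability at most $1-\sum_{i=0}^{k}\binom{c}{i}\rho^i(1-\rho)^{c-i}$. Under \texttt{Client-Centric} each round evaluates the VRF on a fresh input, so the events ``round $j$ has $\le k$ dishonest participants'' are independent across $j\in\{1,\dots,R\}$; multiplying their probabilities and complementing gives $\Pr[\mathcal{B}_1]\le 1-\big(\sum_{i=0}^{k}\binom{c}{i}\rho^i(1-\rho)^{c-i}\big)^R=p_k$.

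For $\mathcal{B}_2$ — the delicate step — a union bound over all $n-c$ honest clients of $\Pr[\mathrm{Bin}(R,\rho)>r]$ is both loose and grows with the server-controlled $n$, so I would argue sequentially. Let $V_j$ be the event that no client appears in more than $r$ of the first $j$ rounds, so $V_R=\overline{\mathcal{B}_2}$, $V_j\subseteq V_{j-1}$, and $\Pr[V_j]=\Pr[V_j\mid V_{j-1}]\,\Pr[V_{j-1}]$. Given $V_{j-1}$, round $j$ can create a violation only if one of its $s$ participants has already appeared exactly $r$ times; replacing ``times participated in rounds $1..j-1$'' by the stochastically larger ``times a candidate in rounds $1..j-1$'' (a genuine product-Bernoulli count with per-trial probability $\le\rho$, independent of round $j$) and bounding over the $s$ participants by a product gives $\Pr[V_j\mid V_{j-1}]\ge\gamma(j,r)=\big(\Pr[\mathrm{Bin}(j-1,\rho)\le r-1]\big)^s=\big(\sum_{i=0}^{r-1}\binom{j-1}{i}\rho^i(1-\rho)^{j-1-i}\big)^s$. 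With $\phi_{j,r}:=1-\Pr[V_j]$ and base case $\phi_{r,r}=0$ (in $r$ rounds nobody can appear in more than $r$ rounds), this unrolls to the stated recurrence $\phi_{j,r}=1-\gamma(j,r)+\gamma(j,r)\phi_{j-1,r}$, and a short induction (the recursion is monotone in $\gamma$ and all our substitutions are on the conservative side) shows $\Pr[\mathcal{B}_2]\le\phi_{R,r}=q_r$. I expect the main obstacle to be discharging this last argument rigorously: the server's downselection from candidates to $s$ participants may be adaptive across rounds, so one must check that conditioning on a client being a round-$j$ participant does not disturb its candidacy counts in earlier rounds, and that passing to candidacy counts, dropping the conditioning on $V_{j-1}$, treating the $s$ participants as independent, and replacing per-round probabilities by the uniform upper bound $\rho$ all move the bound in the safe direction.
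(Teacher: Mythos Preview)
Your overall architecture—carve out $\mathcal{B}_1$ and $\mathcal{B}_2$, invoke the accountant $E_\pi$ on the complement, lift via the union bound and solve for $\delta'$—is exactly the paper's, and your derivation of $p_k$ matches as well.

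The real divergence is in the $q_r$ step, and the obstacle you flag is not a technicality but a genuine gap in the backward-peeling route. By peeling off the \emph{last} round you must control the past candidacy counts of the $s$ round-$j$ participants; but the server, knowing the full history of rounds $1,\dots,j-1$ when it downselects in round $j$, can preferentially pick precisely those round-$j$ candidates whose past candidacy count is already $\ge r$. So conditioning on ``is a round-$j$ participant'' really does skew the past count away from $\mathrm{Bin}(j-1,\rho)$, and neither the ``replace participation by candidacy'' move nor the product over $s$ is safe: heuristically, the chance that \emph{some} round-$j$ candidate has past candidacy count $\ge r$ behaves like $1-(1-\rho q)^{n}$ with $q=\Pr[\mathrm{Bin}(j-1,\rho)\ge r]$, and since $n\rho\approx\alpha s>s$ this can exceed $1-(1-q)^s=1-\gamma(j,r)$.

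The paper resolves this by peeling off the \emph{first} round instead. Let $u_1,\dots,u_s$ be the round-$1$ participants and let $E$ be the event that none of them is a candidate in $\ge r$ of the remaining $R-1$ rounds. In \texttt{Client-Centric} future candidacies are independent of everything the server sees when it picks $u_1,\dots,u_s$, so $\Pr[E]=\gamma(R,r)$ exactly, no matter how adaptive the server is. On $E$ each $u_i$ is selected at most $1+(r-1)=r$ times, so any violator must be one of the other $n-s$ clients and must already violate within rounds $2,\dots,R$; this yields $f_{n,R,r}\le\gamma(R,r)\,f_{n-s,R-1,r}+(1-\gamma(R,r))$ with boundary $f_{n,r,r}=0$. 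Induction (plus monotonicity of $f$ in $n$ to handle $n\le sR$) then gives $f_{n,R,r}\le\phi_{R,r}=q_r$, i.e.\ the same recurrence you wrote down, but now with a rigorous justification. The forward-looking conditioning is exactly what neutralises the adaptivity you were worried about.
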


Corollary~\ref{cor:security_secagg} provides insights into the robustness of \system{} in maintaining the security of SecAgg, while Corollary~\ref{cor:security_ddp} outlines the end-to-end privacy guarantees offered by \system{} when integrated with existing distributed DP systems.
In the specific case discussed above, when SecAgg is used with $t$ set larger than $0.53s$\footnote{The feasible range of $t$  SecAgg is $(0.5s, s]$ in the malicious settings~\cite{bonawitz2017practical}.}, the probability of SecAgg's failure diminishes to less than $8.9 \times 10^{-6}$ when employing \system{}. 
Otherwise, the probability is 1 (Figure~\ref{fig:cor_1_secagg}).
Building upon SecAgg with $t = 0.7s$, if DSkellam~\cite{agarwal2021skellam} is implemented as the distributed DP protocol, with a target $\delta = 1/n$ and using a consistent noise multiplier, training over FEMNIST (refer to the related settings in~\cref{sec:eval_method}), \system{} guarantees a moderate privacy cost of $\epsilon=1.8$. In contrast, insecure selection would result in a prohibitive cost of $\epsilon=265$ (Figure~\ref{fig:cor_2_ddp}).

\PHM{Secure Informed Selection.}
Population refinement in \system{}'s informed selection can be deemed as a preprocessing step that precedes its random selection (\cref{sec:design_informed}).
Hence, the security guarantee provided can be inherited from the random selection.
Assume that the initial population consists of $n_{init}$ clients, with $c$ being the number of dishonest clients.
Denote by $n$ the size of the refined population.
As the server can act maliciously by deliberately excluding honest clients, in the worst case, all $c$ dishonest clients remain in the refined population.
Therefore, the security guarantee can be derived by \emph{directly} applying Theorem~\ref{thm:security_random}.
%!TEX root = ../main.tex
\section{Implementation}~\label{sec:impl}

We have implemented \system{} as a library easily pluggable into existing FL systems with 1235 lines of Python code.
It operates within the server and on each client.

\begin{figure}[t]
	\centering
	\includegraphics[width=1.0\columnwidth]{./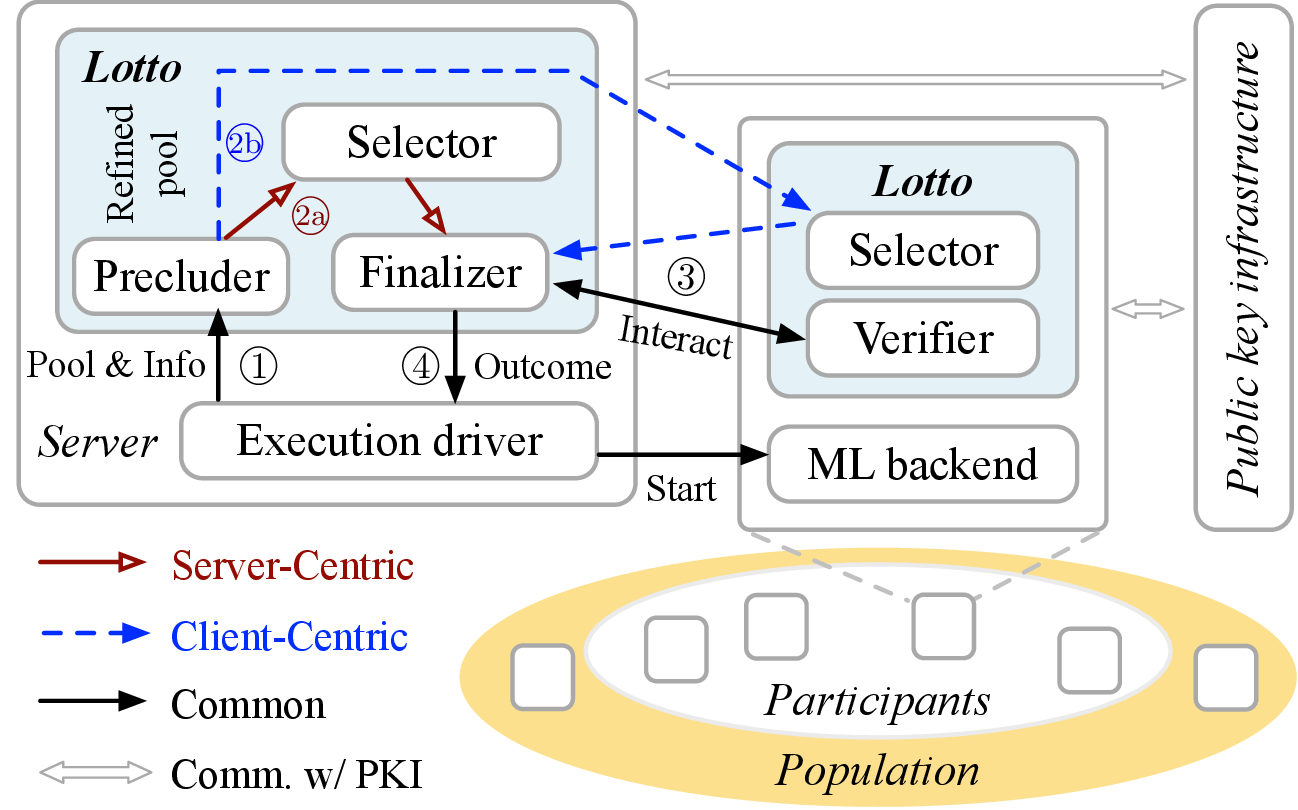}
	\caption{\system{} architecture.}
	\label{fig:arch}
\end{figure}

\PHM{System Workflow.}
Figure~\ref{fig:arch} shows how \system{} interacts with the FL execution framework during participant selection.
\textcircled{\raisebox{-1.0pt}{1}}
\emph{Pool Confirmation}: the framework forms a pool with clients meeting eligibility properties (e.g., battery level, \cref{sec:background_fl}) and forward their identities and characteristics (e.g., latency) to \system{}.
In case of informed selection, the server end refines the pool by excluding worst-performing clients based on specific criteria.
\system{} next proceeds to random selection and has two variants.
\textcircled{\raisebox{-0.4pt}{\footnotesize 2a}}
\emph{Server-Centric Selection}:  the server end selects participants from the pool as candidates using PRFs.
\textcircled{\raisebox{-0.4pt}{\footnotesize 2b}}
\emph{Client-Centric Selection}: for each client in the pool, the server end informs its client end to self-sample with VRFs.
Clients meeting the selection conditions report their results and become candidates.
\textcircled{\raisebox{-1.0pt}{3}} \emph{Mutual Verification}: the server selects the target number of clients from candidates as participants.
It then helps each participant verify the participant list.
\textcircled{\raisebox{-1.0pt}{4}} \emph{Finalization}:
The FL framework proceeds with the participant list output by \system{} and starts FL training.

\PHM{Cryptographic Instantiation.}
Secure PRFs can be implemented with a cryptographic hash function in conjunction with a secret key, and we use \texttt{HMAC-SHA-256}.
For secure signature schemes, we utilize \texttt{Ed25519}~\cite{bernstein2012high}, an elliptic-curve-based signature scheme.
Both of the above choices are implemented leveraging the \texttt{cryptography} Python library~\cite{cryptography}.
We implement VRFs using \texttt{ECVRF-Ed25519-SHA512-Elligator2} as specified in~\cite{draft} using the codebase provided by~\cite{vrf}.

%!TEX root = ../main.tex
\section{Evaluation}~\label{sec:eval}

\begin{table*}[h]
	\centering
	\caption{Per-round training time and network transfer cost for the server and average participating client with random selection.}
	\label{tab:rand_perf}
	\resizebox{\textwidth}{!}{%
		\begin{tabular}{@{}cccccccccccccc@{}}
			\toprule
			\multicolumn{2}{c|}{FL Application} &
			\multicolumn{4}{c|}{FEMNIST@CNN} &
			\multicolumn{4}{c|}{OpenImage@MobileNet} &
			\multicolumn{4}{c}{Reddit@Albert} \\ \midrule
			\multicolumn{1}{c|}{\multirow{2.5}{*}{Population}} &
			\multicolumn{1}{c|}{\multirow{2.5}{*}{Protocol}} &
			\multicolumn{2}{c|}{Time} &
			\multicolumn{2}{c|}{Network} &
			\multicolumn{2}{c|}{Time} &
			\multicolumn{2}{c|}{Network} &
			\multicolumn{2}{c|}{Time} &
			\multicolumn{2}{c}{Network} \\ \cmidrule(l){3-14} 
			\multicolumn{1}{c|}{} &
			\multicolumn{1}{c|}{} &
			Server &
			\multicolumn{1}{c|}{Client} &
			Server &
			\multicolumn{1}{c|}{Client} &
			Server &
			\multicolumn{1}{c|}{Client} &
			Server &
			\multicolumn{1}{c|}{Client} &
			Server &
			\multicolumn{1}{c|}{Client} &
			Server &
			Client \\ \midrule
			\multicolumn{1}{c|}{\multirow{3}{*}{100}} &
			\multicolumn{1}{c|}{Rand} &
			1.76min &
			\multicolumn{1}{c|}{0.97min} &
			64.88MB &
			\multicolumn{1}{c|}{3.9MB} &
			3.06min &
			\multicolumn{1}{c|}{2.28min} &
			64.35MB &
			\multicolumn{1}{c|}{3.87MB} &
			13.0min &
			\multicolumn{1}{c|}{6.67min} &
			958.55MB & 57.46MB
			\\
			\multicolumn{1}{c|}{} &
			\multicolumn{1}{c|}{Cli-Ctr} &
			1.86min &
			\multicolumn{1}{c|}{1.26min} &
			64.94MB &
			\multicolumn{1}{c|}{3.9MB} &
			3.07min &
			\multicolumn{1}{c|}{2.44min} &
			64.4MB &
			\multicolumn{1}{c|}{3.87MB} &
			12.86min &
			\multicolumn{1}{c|}{8.8min} &
			958.6MB & 57.46MB
			\\
			\multicolumn{1}{c|}{} &
			\multicolumn{1}{c|}{Srv-Ctr} &
			1.77min &
			\multicolumn{1}{c|}{0.97min} &
			64.89MB &
			\multicolumn{1}{c|}{3.9MB} &
			2.97min &
			\multicolumn{1}{c|}{2.17min} &
			64.36MB &
			\multicolumn{1}{c|}{3.87MB} &
			12.88min &
			\multicolumn{1}{c|}{6.58min} &
			958.86MB & 57.46MB
			\\ \midrule
			\multicolumn{1}{c|}{\multirow{3}{*}{400}} &
			\multicolumn{1}{c|}{Rand} &
			2.56min &
			\multicolumn{1}{c|}{1.4min} &
			0.26GB &
			\multicolumn{1}{c|}{3.56MB} &
			4.35min &
			\multicolumn{1}{c|}{3.36min} &
			0.25GB &
			\multicolumn{1}{c|}{3.53MB} &
			26.94min &
			\multicolumn{1}{c|}{15.65min} &
			3.75GB & 51.53MB
			\\
			\multicolumn{1}{c|}{} &
			\multicolumn{1}{c|}{Cli-Ctr} &
			2.59min &
			\multicolumn{1}{c|}{1.83min} &
			0.26GB &
			\multicolumn{1}{c|}{3.56MB} &
			4.68min &
			\multicolumn{1}{c|}{3.89min} &
			0.25GB &
			\multicolumn{1}{c|}{3.53MB} &
			27.53min &
			\multicolumn{1}{c|}{21.95min} &
			3.75GB & 51.53MB
			\\
			\multicolumn{1}{c|}{} &
			\multicolumn{1}{c|}{Srv-Ctr} &
			2.29min &
			\multicolumn{1}{c|}{1.3min} &
			0.26GB &
			\multicolumn{1}{c|}{3.56MB} &
			4.51min &
			\multicolumn{1}{c|}{3.49min} &
			0.25GB &
			\multicolumn{1}{c|}{3.53MB} &
			27.17min &
			\multicolumn{1}{c|}{15.76min} &
			3.75GB & 51.53MB
			\\ \midrule
			\multicolumn{1}{c|}{\multirow{3}{*}{700}} &
			\multicolumn{1}{c|}{Rand} &
			3.46min &
			\multicolumn{1}{c|}{2.01min} &
			0.45GB &
			\multicolumn{1}{c|}{3.69MB} &
			5.65min &
			\multicolumn{1}{c|}{4.1min} &
			0.45GB &
			\multicolumn{1}{c|}{3.66MB} &
			40.06min &
			\multicolumn{1}{c|}{24.77min} &
			6.56GB & 52.57MB
			\\
			\multicolumn{1}{c|}{} &
			\multicolumn{1}{c|}{Cli-Ctr} &
			3.82min &
			\multicolumn{1}{c|}{2.82min} &
			0.45GB &
			\multicolumn{1}{c|}{3.69MB} &
			6.23min &
			\multicolumn{1}{c|}{5.06min} &
			0.45GB &
			\multicolumn{1}{c|}{3.66MB} &
			39.59min &
			\multicolumn{1}{c|}{33.91min} &
			6.56GB & 52.57MB
			\\
			\multicolumn{1}{c|}{} &
			\multicolumn{1}{c|}{Srv-Ctr} &
			3.56min &
			\multicolumn{1}{c|}{2.02min} &
			0.45GB &
			\multicolumn{1}{c|}{3.7MB} &
			5.62min &
			\multicolumn{1}{c|}{4.06min} &
			0.45GB &
			\multicolumn{1}{c|}{3.66MB} &
			38.85min &
			\multicolumn{1}{c|}{23.84min} &
			6.56GB & 52.57MB
			\\ \bottomrule
		\end{tabular}%
	}
\end{table*}

We evaluate \system{}’s effectiveness in various FL training testbeds.
The highlights of our evaluation are listed below.

\begin{enumerate}
	\item
	Compared to insecure selection methods, \system{} introduces at most 10\% runtime overhead and negligible extra network usage across different participation scales (\cref{sec:eval_efficiency}).
	\item In the case of informed selection, the adoption of \system{} achieves comparable or even superior training efficiency compared to that of insecure methods (\cref{sec:eval_approximation}).
\end{enumerate}

\subsection{Methodology}
~\label{sec:eval_method}

\PHB{Datasets and Models.}
We run two common categories of applications with three real-world datasets of different scales.

\begin{itemize}
	\item \emph{Image Classification}:
	the FEMNIST~\cite{caldas2018leaf} dataset, with 805k images classified into 62 classes, and a more complicated dataset OpenImage~\cite{openimage} with 1.5M images spanning 600 categories.
	We train a CNN~\cite{kairouz2021distributed, agarwal2021skellam}, and MobileNet V2 ~\cite{sandler2018mobilenetv2} with 0.25 as width multiplier to classify the images in FEMNIST and OpenImage, respectively.
	\item \emph{Language Modeling}: the large-scale Reddit dataset~\cite{reddit}.
	We train the Albert model~\cite{lan2020albert} for next-word prediction.
\end{itemize}

\PHM{Cluster Setup.}
We launch an AWS EC2 \texttt{r5n.8xlarge} instance (32 vCPUs, 256 GB memory, and 25 Gbps network) for the server and one \texttt{c5.xlarge} (4 vCPUs and 8 GB memory) instance for each client, aiming to match the computing power of mobile devices.
We also throttle clients’ bandwidth to fall at 44Mbps to match the average mobile bandwidth~\cite{ciscoannual}.

\PHM{Hyperparameters.}
For local training, we use the mini-batch SGD for FEMNIST and OpenImage and AdamW~\cite{loshchilov2018decoupled} for Reddit, all with momentum set to 0.9.
The number of training rounds and local epochs are 50 and 2 for both FEMNIST and Reddit and 100 and 3 for OpenImage, respectively.
The batch size is consistently 20, while the learning rate is fixed at 0.01 for FEMNIST and 8e-5 for Reddit, respectively.
For OpenImage, the initial learning rate is $0.05$, and decreases by a factor of 0.98 after every 10 rounds.
For model aggregation, we apply weighted averaging~\cite{mcmahan2017communication} protected by the SecAgg~\cite{bonawitz2017practical} protocol (\cref{sec:background_defense}).

\PHM{Baselines.}
In evaluating random selection,  we compare \system{} against Rand~\cite{mcmahan2017communication}, the naive random selection protocol.
As for informed selection, we compare \system{} against Oort~\cite{lai2021oort}, the state-of-the-art informed algorithm designed for maximum training efficiency.
Both Oort and Rand are insecure as the server has the freedom to include dishonest clients.
For \system{}, we primarily evaluate \texttt{Client-Centric}, while also demonstrating its similarity to \texttt{Server-Centric}  (\cref{sec:design_random}).

\subsection{Execution Efficiency}
\label{sec:eval_efficiency}

We assess the time performance and network overhead of \system{}.
To study the impact of the federation scale, we vary the population size from 100 to 400 to 700.
The target sampled clients are proportionately set to 10, 40, and 70, respectively.

\PHM{\system{} Induces Acceptable Overhead in Time.}
Table~\ref{tab:rand_perf} reports the average round time measured at the server when random selection is deployed.
Compared to Rand, training with \texttt{Client-Centric} incurs a time cost of less than 6\%, 8\%, and 10\% for population sizes of 100, 400, and 700, respectively.
Moreover, the overhead for \texttt{Server-Centric} is 1\%, 4\%, and 3\%, respectively.
Such acceptable overhead is expected as \system{} is built on lightweight security primitives (\ref{sec:design_random}).
Table~\ref{tab:rand_perf} also reports the time cost from the perspective of participants.
In comparison to Rand, \texttt{Server-Centric} incurs less than 4\% overhead at the participant side, again; whereas \texttt{Client-Centric} incurs an overhead of up to 40\% due to the offloading of the client selection process to each client.\footnote{Such inflation barely affects the end-to-end latency (measured at the server end) as participants run in parallel with the server.}
Regarding informed selection, the observed trends are similar to those reported in Table~\ref{tab:rand_perf} (with Rand replaced by Oort) and are deferred to Appendix~\ref{sec:appendix_informed}.

\PHM{\system{} Induces Negligible Overhead in Network.}
Table~\ref{tab:rand_perf} also provides the network footprint per round, measured at both the server end and the participant end.
As one can see, \system{} consistently introduces less than 1\% network overhead across all evaluated tasks and participation scales.
This can be attributed to the fact that the predominant communication overhead in FL is the transmission of the model, rather than other auxiliary information.
For example, when training over Reddit with 700 clients, model transfer in a round yields a network footprint of approximately 6.56 GB to the server.
In contrast, the extra cost brought by \texttt{Client-Centric} and \texttt{Server-Centric} are merely 1.3 MB and 0.3 MB, respectively.
As for informed selection, again, we defer the results to Appendix~\ref{sec:appendix_informed} given the similar implications.

\PHM{Client-Centric v.s. Server-Centric.}
The preceding analysis emphasizes \system{}'s minimal time and network costs, irrespective of whether it is implemented using a \texttt{Client-Centric} or \texttt{Server-Centric} approach.
Given their comparable efficiency, we recommend adopting \texttt{Client-Centric}, as it ensures enhanced security (\cref{sec:design_random}).

\begin{figure}[t]
	\centering
	\begin{subfigure}[b]{0.48\columnwidth}
		\centering
		\includegraphics[width=\columnwidth]{./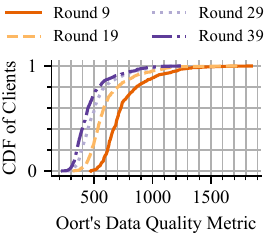}
		\caption{FEMNIST data.}
		\label{fig:hetero-femnist}
	\end{subfigure} \hfill
	\begin{subfigure}[b]{0.48\columnwidth}
		\centering
		\includegraphics[width=\columnwidth]{./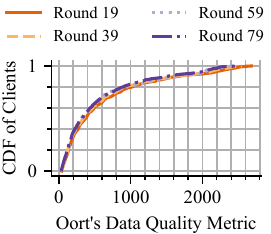}
		\caption{OpenImage data.}
		\label{fig:hetero-openimage}
	\end{subfigure}
	\begin{subfigure}[b]{0.48\columnwidth}
		\centering
		\includegraphics[width=\columnwidth]{./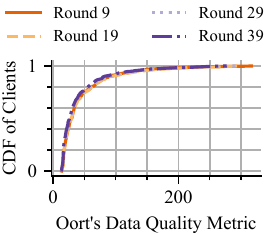}
		\caption{Reddit data.}
		\label{fig:hetero-reddit}
	\end{subfigure} \hfill
	\begin{subfigure}[b]{0.48\columnwidth}
		\centering
		\includegraphics[width=\columnwidth]{./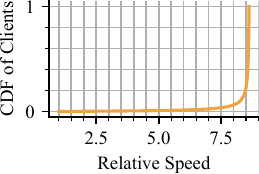}
		\caption{Client system speed.}
		\label{fig:hetero-speed}
	\end{subfigure}
	\caption{Visualization of the heterogeneous settings.}
	\label{fig:hetero}
\end{figure}

\begin{figure*}[t]
	\centering
	\begin{subfigure}[b]{0.33\linewidth}
		\centering
		\includegraphics[width=\columnwidth]{./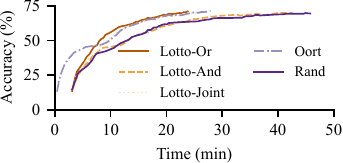}
		\caption{FEMNIST@CNN}		\label{fig:training-femnist}
	\end{subfigure} \hfill
	\begin{subfigure}[b]{0.33\linewidth}
		\centering
		\includegraphics[width=\columnwidth]{./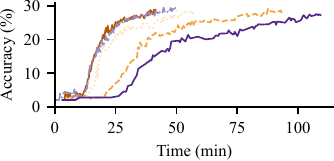}
		\caption{OpenImage@MobileNet.}
		\label{fig:training-openimage}
	\end{subfigure} \hfill
	\begin{subfigure}[b]{0.33\linewidth}
		\centering
		\includegraphics[width=\columnwidth]{./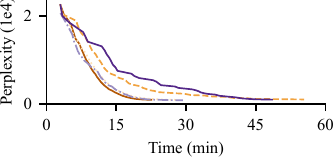}
		\caption{Reddit@Albert.}
		\label{fig:training-reddit}
	\end{subfigure}
	\caption{\system{} (with \texttt{Or} strategy) effectively approximates Oort and significantly outperforms Rand in training efficiency.}
	\label{fig:training}
	%	\vspace*{-0.20in}
\end{figure*}

\subsection{Effectiveness of Informed Selection}
\label{sec:eval_approximation}

We next evaluate the effectiveness of \system{}'s approximation strategies in achieving the objective of the target informed selection algorithm.
We focus on the \texttt{Client-Centric} implementation of \system{} and use the 700-client testbed (\cref{sec:eval_efficiency}).

\PHM{Heterogeneous Environment.}
The realistic datasets we employ (\cref{sec:eval_method}) already demonstrate data heterogeneity, as they are partitioned by the original data owners.
Figure~\ref{fig:hetero-femnist} to~\ref{fig:hetero-reddit} provides a visualization of the distribution of clients' data quality, as measured by Oort's definition (the first term in Equation~\eqref{eq:oort_score}) without loss of generality.
Snapshots are taken across different rounds of the training.
For Oort's considerations on system speed to be relevant, we additionally emulate hardware heterogeneity.
To this end, we set the response latencies of clients to consistently follow a Zipf distribution parameterized by $a = 1.2$ (moderately skewed), regardless of the specific FL task, as illustrated in Figure \ref{fig:hetero-speed}.

\PHM{The \texttt{Or} Strategy Outperforms Other Choices of \system{}.}
\system{} considers various strategies for population refinement when approximating the target informed algorithm (\cref{sec:design_informed}).
Here, we compare them using a consistent exclusion threshold of 20\%.
Under this threshold, a client is excluded if it ranks in the bottom 20\% in either system speed or data quality (\texttt{Or}); in both metrics (\texttt{And}); or in the comprehensive utility score (Equation~\eqref{eq:oort_score}) defined by Oort (\texttt{Joint}).
Additionally, we include the Rand method as a baseline reference.
As illustrated in Figure~\ref{fig:training}, \texttt{Or} and \texttt{Joint} constantly outperform Rand, while \texttt{And} fails to surpass it in certain cases, e.g., FEMNIST (Fig.~\ref{fig:training-femnist}).
This can be attributed to the independent distributions of system speed and data quality across clients.
As there are only a few clients that rank in the bottom 20\% simultaneously for both metrics (observed to be $\leq$5\% of the population), the number of clients that \texttt{And} excludes is inadequate.
Between \texttt{Or} and \texttt{Joint}, we advocate the use of the former due to its consistently superior performance and fix its use for the remaining analysis.

\PHM{\system{} Achieves Comparable or Even Superior Performance to Oort.}
To examine the time-to-accuracy performance, we set the target accuracy at 69.8\% and 27.6\% for FEMNIST and OpenImage, respectively.
For Reddit, we set the target perplexity (lower values are better) at 1010.
As depicted in Figure~\ref{fig:training}, \system{} achieves nearly identical performance to Oort in OpenImage, and it even surpasses Oort by 1.1$\times$ for both FEMNIST and Reddit.
In comparison to Rand, \system{} outperforms it by 2.1$\times$, 3.0$\times$, and 1.6$\times$ in FEMNIST, OpenImage, and Reddit, respectively.
Note that we have accounted for the disparity in execution efficiency between \system{} and baselines (\cref{sec:eval_efficiency}) for fairness of comparison.
These results underscore the significant enhancement in the quality of selected clients through population refinement, specifically employing the \texttt{Or} strategy with a 20\% exclusion factor, making the following random selection competitive with Oort which cherrypicks the best clients.
Overall, the randomness introduced by \system{} for enhanced security does not compromise the training efficiency achieved by insecure methods.

%!TEX root = ../main.tex
\section{Discussion and Future Work}~\label{sec:discuss}

\PHB{Enhanced Support for SecAgg and Distributed DP.}
\system{} achieves a proportion of compromised participants close to the base rate of dishonest clients in the population, regardless of its magnitude (Theorem~\ref{thm:security_random}).
Following this property, given an honest majority in the population, \system{} further ensures an honest majority among the participants.
This entertains both secure aggregation (by tightening the bound in Corollary~\ref{cor:security_secagg}) and distributed differential privacy (by lowering the bound in Corollary~\ref{cor:security_ddp}).
This condition can hold in practice, as simulating or compromising a large number of clients incurs prohibitive costs for the adversary given a vast client population in FL.\footnote{If there are only tens or hundreds of clients, participant selection is no longer needed as all clients can be involved simultaneously~\cite{kairouz2019advances}.}
Specifically, it is expensive to register millions of identities to a PKI, which usually requires a unique, verifiable identifier for validation~\cite{bonawitz2017practical, bell2020secure}.
For example, applying for Hong Kong Post eCert for individual PKI use mandates personal IDs~\cite{ecert}.
Moreover, the costs involved in operating a client botnet at scale are also excessively high~\cite{shejwalkar2022back}.

\PHM{General Solutions for Inter-Stage Consistency.}
\system{} does not explicitly ensure consistent reference to a participant list throughout the entire FL workflow.
This is because, for SecAgg this paper primarily focuses on, any inconsistent reference by the server offers no significant advantage in probing clients' data (\cref{sec:design_random}).
This applies to other secure aggregation protocols, such as SecAgg+~\cite{bell2020secure}, the state-of-the-art follow-up work of SecAgg.
Similar to SecAgg, SecAgg+ ensures that each participant will not share secrets with clients beyond its verified participant list, and also includes controllable dropout tolerance that can prevent the server from pretending the dropout of numerous honest participants.
However, it would be beneficial to explore general solutions for achieving inter-stage consistency when combining \system{} with other potential privacy-enhancing techniques in the future.

%!TEX root = ../main.tex
\section{Related Work}~\label{sec:related}

\PHB{Privacy-Preserving Aggregation with Malicious Server.}
Numerous protocols have been proposed to enable a server to learn the sum of multiple inputs without gaining access to individual inputs.
However, many of these protocols assume an honest-but-curious server~\cite{kadhe2020fastsecagg, so2021turbo, so2022lightsecagg} or a trusted third party~\cite{bittau2017prochlo, cheu2019distributed, erlingsson2019amplification}, which has limited practicality.
Only the commonly used SecAgg~\cite{bonawitz2017practical} and SecAgg+~\cite{bell2020secure} and a few other protocols~\cite{ma2023flamingo, liu2023dhsa, rathee2022elsa} allow for a malicious server.
To further control the information leakage through the sum,  distributed DP protocols~\cite{kairouz2021distributed, agarwal2021skellam, stevens2022efficient, jiang2024efficient} provide rigorous privacy guarantees by combining the DP noise addition with SecAgg.
However, both secure aggregation and distributed DP are still vulnerable to the privacy risks posed by a dishonest majority among participants, while \system{} takes the initiative to tackle this issue by secure participant selection.

\PHM{Secure Model Utility against Adversaries.}
Besides probing data privacy, an adversary may also compromise the global model performance~\cite{ bagdasaryan2020backdoor, fang2020local, shejwalkar2021manipulating, wang2020attack}.
Several approaches have been proposed to mitigate these attacks, e.g., by employing certifiable techniques during local training~\cite{cao2021provably, xie2021crfl,   burkhalter2021rofl} and model aggregation.
These approaches back up \system{} with robust model utility, whereas \system{} enhances data privacy.

%!TEX root = ../main.tex
\section{Conclusion}~\label{sec:conclusion}

Several aggregation protocols have been developed in FL to enhance the privacy of clients' data. However, the participant selection mechanisms used in these protocols are vulnerable to manipulation by adversarial servers. We proposed \system{} to address this problem. \system{} offers two secure selection algorithms, random and informed, both of which effectively align the fraction of compromised participants with the base rate of dishonest clients in the population. Theoretical analysis and large-scale experiments demonstrate that \system{} not only provides enhanced security but also achieves time-to-accuracy performance comparable to insecure selection methods, while incurring minimal network cost.

%!TEX root = ../main.tex
\section*{Acknowledgments}

We thank anonymous reviewers for their insightful comments.
We are grateful to Shaohuai Shi for providing GPU clusters to support the simulation experiments.
This research was supported in part by an RGC RIF grant under the contract R6021-20, RGC CRF grants under the contracts C7004-22G and C1029-22G, and RGC GRF grants under the contracts 16209120, 16200221, 16207922, and 16211123.

\bibliographystyle{plain}
\bibliography{main}

%\vfill
%\pagebreak
\appendix
%\input{appendix/selection.tex}
%\input{appendix/complexity.tex}
%!TEX root = ../main.tex
\section{Proofs for Security Analysis Results (\cref{sec:design_security})}
~\label{sec:appendix_security}

\begin{reptheorem}{thm:security_random}
	If the protocol proceeds without abortion, for any $\eta > 1$, the probability that the proportion of dishonest participants, i.e., $x/s$, exceeds that in the population, i.e., $c/n$, is upper bounded as
\begin{equation*}
	\Pr [\frac{x}{s} > \eta \frac{c}{n}] \leq 1 - \sum_{i=0}^{\lfloor \eta  c s / n \rfloor} \binom{c}{i}\left(\frac{1}{m} \lfloor \frac{\alpha sm}{n_{min}} \rfloor\right)^i \left(1 - \frac{1}{m} \lfloor \frac{\alpha sm}{n_{min}} \rfloor\right)^{c - i}.
\end{equation*}
\end{reptheorem}

\begin{proof}
	The probability that $x$ exceeds some value $l$ is $\textrm{Pr}[x > l] = 1 - \sum_{i=0}^{\lfloor l \rfloor} \binom{c}{i} p^{i} (1-p)^{c-i}$, where $p$ is a dishonest client's chance of being selected into $P$.
	With client verification and consistency check enforced in \system{},  we have that $p = \frac{1}{m} \lfloor \frac{\alpha sm}{n} \rfloor \leq \frac{1}{m} \lfloor \frac{\alpha sm}{n_{min}} \rfloor$.
	The theorem follows when we combine the two arguments with $l = \lfloor \eta c s / n \rfloor$.
\end{proof}

\begin{repcorollary}{cor:security_secagg}
 	Let $t$ be the aggregation threshold of SecAgg.
	With \system{}'s random selection, the probability of the server being able to observe an individual update of an honest client, which indicates a failure of SecAgg, is upper bounded as
	\begin{equation*}
		\Pr [Fail] \leq 1 - \sum_{i=0}^{2t- s -1} \binom{c}{i}\left(\frac{1}{m} \lfloor \frac{\alpha sm}{n_{min}} \rfloor\right)^i \left(1 - \frac{1}{m} \lfloor \frac{\alpha sm}{n_{min}} \rfloor\right)^{c - i}.
	\end{equation*}
\end{repcorollary}

\begin{proof}
	In SecAgg, the secrecy of an individual update is held only when the number of dishonest participants $x < 2t - s$ (see Theorem 6.5 in~\cite{bonawitz2017practical}).
	The corollary follows when we replace $l$ with $2t - s -1$ in the proof of Theorem~\ref{thm:security_random}.
\end{proof}

\begin{repcorollary}{cor:security_ddp}
	Let $\pi$ be the distributed DP protocol parameterized by $\sigma$ and built atop SecAgg with aggregation threshold $t$.
	Given target $\delta$, an $R$-round FL training with $\pi$ and \system{}'s  random selection (\texttt{Client-Centric}) achieves $(\epsilon, \delta)$-DP, where
	\begin{equation*}
		\begin{aligned}
			\epsilon &= \min_{\substack{0 \leq k \leq \min \{c, 2t - s - 1\} \\ 0 \leq r \leq R \\ p_k + q_r < \delta}} E_\pi(s, k, r, \sigma, 1 - \frac{1-\delta}{1 - p_k - q_r}),\\
			p_k &= 1 - \left( \sum_{i=0}^k \binom{c}{i} \left(\frac{1}{m} \lfloor \frac{\alpha s m}{n_{min}} \rfloor \right)^i \left(1 - \frac{1}{m}\lfloor \frac{\alpha s m}{n_{min}} \rfloor \right)^{c - i} \right)^R,
		\end{aligned}
	\end{equation*}
	with $\sigma$ the noise multiplier used in $\pi$, and $E_\pi(\cdot)$ the privacy accounting method of $\pi$.
	In addition, $q_r = \phi_{R, r}$ with the recurrence definition of $\phi$ being $\phi_{j, r} = 1 - \gamma(j, r) + \gamma(j, r) \phi_{j - 1, r}$ with boundary condition $\phi_{r, r} = 0$ and $\gamma(j, r) = $
	\begin{equation*}
		\left( \sum_{i=0}^{r - 1} \binom{j - 1}{i} \left(\frac{1}{m} \lfloor \frac{\alpha s m}{n_{min}} \rfloor \right)^i \left(1 - \frac{1}{m}\lfloor \frac{\alpha s m}{n_{min}} \rfloor \right)^{j - 1 - i} \right)^s.
	\end{equation*}
\end{repcorollary}

\begin{proof}
	In each round, each client flips a coin and becomes a candidate with probability $p = \frac{1}{m}\lfloor\frac{\alpha s m}{n_{min}}\rfloor$.
	The server then selects $s$ participants from the candidates according to some selection policy. 
	First, consider the event that in each round, there are at most $k$ colluding participants in the candidate sets (Event $A$). The probability of $A$ is
	\begin{equation*}
		\Pr[A] = \left(\sum_{i = 0}^k \binom{c}{i}p^i (1 - p)^{c - i}\right)^R = 1 - p_k,
	\end{equation*}
	when the FL algorithm runs for $R$ rounds, and $p = \frac{1}{m}\lfloor \frac{\alpha s m}{n_{min}} \rfloor$ given the use of \system{} (\texttt{Client-Centric}). 
	
	Next, consider the event that every client is selected in at most $r$ rounds (Event $B$). We now derive a lower bound on its probability. Let $f_{n, R, r}$ be the maximum (among all possible policies) probability that at least one client is selected in at least $r + 1$ rounds, given that the population has $n$ clients and the training runs for $R$ rounds. Consider the $s$ clients selected by the server in the first round. Let $E$ be the event that none of these $s$ clients becomes a candidate in at least $r$ rounds in the remaining $R - 1$ rounds. Conditioning on $E$, all of them cannot be selected in at least $r + 1$ rounds. Thus, if one client is selected in at least $r + 1$ rounds, it must be either (i) one of the $n - s$ clients that are not selected in the first round, or (ii) selected in at least $r + 1$ rounds among the rest $R - 1$ rounds.
	We then have $f_{n, R, r}\le \Pr[E]f_{n - s, R - 1, r} + 1 - \Pr[E]$, where
	\begin{equation*}
		\Pr[E] = \left(\sum_{i = 0}^{r - 1} \binom{R - 1}{i}p^i(1 - p)^{R - 1 - i}\right)^s.
	\end{equation*}
	
	By definition, $f_{n, r, r} = 0$, as no clients can be selected in $r + 1$ rounds given only $r$ rounds in total.
	When $n > sR$, we can see by induction that $f_{n, R, r} \le \phi_{R, r}$, where $\phi_{R, r}$ is as defined in the corollary.
	For $n\le sR$, since $f_{n, R, r}$ is non-decreasing in $n$ (the server can simply ignore clients), we also have $f_{n, R, r}\le f_{sR + 1, R, r} \le \phi_{R, r}$.
	Thus, $\Pr[B]\ge 1 - f_{n, R, r} \ge 1 - \phi_{R, r} = 1 - q_r$.

	As SecAgg tolerates $2t - s - 1$ colluding clients, conditioning on the event $A\cap B$, the FL protocol is $(\varepsilon', \delta')$-DP for $0\le k \le \min\{c, 2t - s - 1\}, $ and $0\le r \le R$, where $\varepsilon' = E_{\pi}(s, k, r, \sigma, \delta')$. Thus the overall FL training is $(\varepsilon', 1 - (1 - \delta') \Pr[A\cap B])$-DP. The union bound implies that this happens with probability $\Pr[A\cap B] \ge 1 - p_k - q_r$. Thus for a target $\delta$, we can set $\delta' = 1 - (1 - \delta) / (1 - p_k - q_r)$. Minimizing $\varepsilon'$ over all valid $k$ and $r$ gives the desired results.
\end{proof}

\begin{table*}[h]
	\centering
	\caption{Per-round training time and network transfer cost for the server and average participating client with informed selection.}
	\label{tab:info_perf}
	\resizebox{\textwidth}{!}{%
		\begin{tabular}{@{}cccccccccccccc@{}}
			\toprule
			\multicolumn{2}{c|}{FL Application} &
			\multicolumn{4}{c|}{FEMNIST@CNN} &
			\multicolumn{4}{c|}{OpenImage@MobileNet} &
			\multicolumn{4}{c}{Reddit@Albert} \\ \midrule
			\multicolumn{1}{c|}{\multirow{2.5}{*}{Population}} &
			\multicolumn{1}{c|}{\multirow{2.5}{*}{Protocol}} &
			\multicolumn{2}{c|}{Time} &
			\multicolumn{2}{c|}{Network} &
			\multicolumn{2}{c|}{Time} &
			\multicolumn{2}{c|}{Network} &
			\multicolumn{2}{c|}{Time} &
			\multicolumn{2}{c}{Network} \\ \cmidrule(l){3-14} 
			\multicolumn{1}{c|}{} &
			\multicolumn{1}{c|}{} &
			Server &
			\multicolumn{1}{c|}{Client} &
			Server &
			\multicolumn{1}{c|}{Client} &
			Server &
			\multicolumn{1}{c|}{Client} &
			Server &
			\multicolumn{1}{c|}{Client} &
			Server &
			\multicolumn{1}{c|}{Client} &
			Server &
			Client \\ \midrule
			\multicolumn{1}{c|}{\multirow{3}{*}{100}} &
			\multicolumn{1}{c|}{Oort} &
			1.46min &
			\multicolumn{1}{c|}{0.83min} &
			64.88MB &
			\multicolumn{1}{c|}{3.9MB} &
			2.7min &
			\multicolumn{1}{c|}{2.07min} &
			64.35MB &
			\multicolumn{1}{c|}{3.87MB} &
			12.72min &
			\multicolumn{1}{c|}{6.45min} &
			958.55MB & 57.46MB
			\\
			\multicolumn{1}{c|}{} &
			\multicolumn{1}{c|}{Cli-Ctr} &
			1.53min &
			\multicolumn{1}{c|}{1.05min} &
			64.97MB &
			\multicolumn{1}{c|}{3.9MB} &
			2.84min &
			\multicolumn{1}{c|}{2.35min} &
			64.43MB &
			\multicolumn{1}{c|}{3.87MB} &
			12.72min &
			\multicolumn{1}{c|}{8.58min} &
			958.63MB & 57.46MB
			\\
			\multicolumn{1}{c|}{} &
			\multicolumn{1}{c|}{Srv-Ctr} &
			1.43min &
			\multicolumn{1}{c|}{0.83min} &
			64.89MB &
			\multicolumn{1}{c|}{3.9MB} &
			2.76min &
			\multicolumn{1}{c|}{2.14min} &
			64.36MB &
			\multicolumn{1}{c|}{3.87MB} &
			12.65min &
			\multicolumn{1}{c|}{6.51min} &
			958.56MB & 57.46MB
			\\ \midrule
			\multicolumn{1}{c|}{\multirow{3}{*}{400}} &
			\multicolumn{1}{c|}{Oort} &
			2.4min &
			\multicolumn{1}{c|}{1.33min} &
			0.26GB &
			\multicolumn{1}{c|}{3.56MB} &
			4.44min &
			\multicolumn{1}{c|}{3.34min} &
			0.25GB &
			\multicolumn{1}{c|}{3.53MB} &
			27.94min &
			\multicolumn{1}{c|}{16.46min} &
			3.75GB & 51.53MB
			\\
			\multicolumn{1}{c|}{} &
			\multicolumn{1}{c|}{Cli-Ctr} &
			2.64min &
			\multicolumn{1}{c|}{1.87min} &
			0.26GB &
			\multicolumn{1}{c|}{3.56MB} &
			4.69min &
			\multicolumn{1}{c|}{3.89min} &
			0.25GB &
			\multicolumn{1}{c|}{3.54MB} &
			28.15min &
			\multicolumn{1}{c|}{22.67min} &
			3.75GB & 51.53MB
			\\
			\multicolumn{1}{c|}{} &
			\multicolumn{1}{c|}{Srv-Ctr} &
			2.37min &
			\multicolumn{1}{c|}{1.34min} &
			0.26GB &
			\multicolumn{1}{c|}{3.56MB} &
			4.43min &
			\multicolumn{1}{c|}{3.38min} &
			0.25GB &
			\multicolumn{1}{c|}{3.53MB} &
			27.79min &
			\multicolumn{1}{c|}{16.51min} &
			3.75GB & 51.53MB
			\\ \midrule
			\multicolumn{1}{c|}{\multirow{3}{*}{700}} &
			\multicolumn{1}{c|}{Oort} &
			3.69min &
			\multicolumn{1}{c|}{2.09min} &
			0.45GB &
			\multicolumn{1}{c|}{3.69MB} &
			5.71min &
			\multicolumn{1}{c|}{4.12min} &
			0.45GB &
			\multicolumn{1}{c|}{3.66MB} &
			40.83min &
			\multicolumn{1}{c|}{25.22min} &
			6.56GB & 52.57MB
			\\
			\multicolumn{1}{c|}{} &
			\multicolumn{1}{c|}{Cli-Ctr} &
			3.83min &
			\multicolumn{1}{c|}{2.79min} &
			0.46GB &
			\multicolumn{1}{c|}{3.7MB} &
			6.01min &
			\multicolumn{1}{c|}{4.98min} &
			0.45GB &
			\multicolumn{1}{c|}{3.67MB} &
			40.97min &
			\multicolumn{1}{c|}{34.83min} &
			6.56GB & 52.57MB
			\\
			\multicolumn{1}{c|}{} &
			\multicolumn{1}{c|}{Srv-Ctr} &
			3.35min &
			\multicolumn{1}{c|}{1.95min} &
			0.45GB &
			\multicolumn{1}{c|}{3.69MB} &
			5.37min &
			\multicolumn{1}{c|}{3.98min} &
			0.45GB &
			\multicolumn{1}{c|}{3.66MB} &
			39.8min &
			\multicolumn{1}{c|}{24.73min} &
			6.56GB & 52.57MB 
			\\ \bottomrule
		\end{tabular}%
	}
\end{table*}

%!TEX root = ../main.tex
\section{Client-Centric v.s. Server-Centric}
~\label{sec:appendix_centric}

We consider \texttt{Client-Centric} as the primary approach for \system{}'s random selection as it offers superior security  (\cref{sec:design_random}).
We illustrate this with multiple-round DP training, where the privacy cost is contingent upon the client selected for the maximum number of times, assuming a fixed noise multiplier.
With \texttt{Client-Centric}, the server cannot predict which client will be self-sampled the most frequently until the training stops.
Therefore, the server cannot reserve this specific client for each round it reports to join to construct the worst-case privacy leakage.
However, with \texttt{Server-Centric}, the server can predict the selection outcomes and reserve this client with certainty.
Hence, \texttt{Server-Centric} provides the adversary with less randomness,
%which has implications for its security guarantees.
as implied by the following corollary:

\begin{corollary}[Controlled Privacy Cost in Distributed DP, II]
	Let $\pi$ be the distributed DP protocol parameterized by $\sigma$ and built atop SecAgg with aggregation threshold $t$.
	Given target $\delta$, an $R$-round FL training with $\pi$ and \system{}'s  random selection (\texttt{Server-Centric}) achieves $(\epsilon, \delta)$-DP, where
	\begin{equation*}
		\begin{aligned}
			\epsilon &= \min_{\substack{0 \leq k \leq \min \{c, 2t - s - 1\} \\ 0 \leq r \leq R, p_k + q_r < \delta}} E_\pi(s, k, r, \sigma, 1 - \frac{1-\delta}{1 - p_k - q_r}),\\
			p_k &= 1 - \left( \sum_{i=0}^k \binom{c}{i} \left(\frac{1}{m} \lfloor \frac{\alpha s m}{n_{min}} \rfloor \right)^i \left(1 - \frac{1}{m}\lfloor \frac{\alpha s m}{n_{min}} \rfloor \right)^{c - i} \right)^R,\\
			q_r &= 1 - \left( \sum_{i=0}^r \binom{R}{i} \left(\frac{1}{m} \lfloor \frac{\alpha s m}{n_{min}} \rfloor \right)^i \left(1 - \frac{1}{m}\lfloor \frac{\alpha s m}{n_{min}} \rfloor \right)^{R - i} \right)^{n_{max}},
		\end{aligned}
	\end{equation*}
	with $\sigma$ the noise multiplier and $E_\pi(\cdot)$ the privacy accounting method for $\pi$, and $n_{max}$ the maximum possible size of the population.
%	\footnote{This preset constant can be set to a large value, such as $10^9$, which represents the physical limit of the number of devices in the real world.}.
	\label{cor:security_ddp_server}
\end{corollary}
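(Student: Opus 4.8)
The plan is to mirror the two-event decomposition used in the proof of Corollary~\ref{cor:security_ddp}, reusing verbatim the treatment of the ``few colluding candidates'' event $A$ and the concluding composition argument, and altering only the analysis of the ``no honest client is over-sampled'' event $B$ so as to reflect that \texttt{Server-Centric} selection is a deterministic function of public inputs. Throughout, write $p = \frac{1}{m}\lfloor\frac{\alpha s m}{n_{min}}\rfloor$.

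First I would redo the $A$-part. As argued in \cref{sec:design_random}, the client-side check $\textrm{\texttt{PRF}}_{pk_j^{\system{}}}(r) < \alpha s m/n$ forces any client's candidacy probability to be $\frac{1}{m}\lfloor\frac{\alpha s m}{n}\rfloor \le p$ no matter what $n$ the server announces (the maximum being attained at $n = n_{min}$), and since $pk_j^{\system{}}$ is PKI-issued, even colluding clients cannot deviate from this. By pseudorandomness of \texttt{PRF} and independence across rounds and keys, the event $A$ that in every one of the $R$ rounds at most $k$ of the $c$ colluding clients are candidates has $\Pr[A] = \left(\sum_{i=0}^{k}\binom{c}{i}p^i(1-p)^{c-i}\right)^{R} = 1-p_k$, exactly as in the \texttt{Client-Centric} case, since the server cannot push a colluding client's candidacy beyond its PKI-fixed PRF value.

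The substantive difference is event $B$, that every (honest) client is selected in at most $r$ rounds. Under \texttt{Server-Centric}, the entire candidacy schedule $\{j : \textrm{\texttt{PRF}}_{pk_j^{\system{}}}(\rho) < \alpha s m/n\}_{\rho}$ is a deterministic function of public values fixed before online training, so a malicious server may commit in advance to a strategy that, for a single targeted client, keeps it in the finalized set $P$ in every round in which it is a candidate; this is always feasible because finalization only requires $|C|\ge s$ and the server is free to choose which $s$ candidates to retain. Hence, conditioned on the schedule, under the worst-case adversary $\bar B$ occurs iff some client is a candidate in more than $r$ rounds. Each client's number of candidate rounds is $\mathrm{Binomial}(R,p)$, these counts are independent across clients (distinct PRF keys), $\Pr[\mathrm{Binomial}(R,p)\le r]$ is non-increasing in $p$, and the product of such probabilities over the population is non-increasing in the population size; substituting the worst case $p = \frac{1}{m}\lfloor\frac{\alpha s m}{n_{min}}\rfloor$ and upper-bounding the population size by $n_{max}$ then gives $\Pr[B]\ge\left(\sum_{i=0}^{r}\binom{R}{i}p^i(1-p)^{R-i}\right)^{n_{max}} = 1-q_r$. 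This clean product replaces the recurrence $\phi$ of Corollary~\ref{cor:security_ddp}, which was needed only because \texttt{Client-Centric} confronts the adversary with fresh, unpredictable randomness each round rather than a schedule it can plan against.

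Finally, conditioning on $A\cap B$ with $k\le\min\{c,2t-s\}$ keeps SecAgg secret in every round (participants form a subset of candidates, so at most $k\le 2t-s$ of them collude) and caps every client at $r$ exposures, so $\pi$'s accountant yields $(\varepsilon',\delta')$-DP with $\varepsilon'=E_\pi(s,k,r,\sigma,\delta')$; the union bound gives $\Pr[A\cap B]\ge 1-p_k-q_r$, whence the whole training is $(\varepsilon', 1-(1-\delta')\Pr[A\cap B])$-DP, and taking $\delta'=1-(1-\delta)/(1-p_k-q_r)$ (valid when $p_k+q_r<\delta$) and minimizing $\varepsilon'$ over admissible $k,r$ finishes the proof. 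I expect the main obstacle to be not a calculation — the composition step is a verbatim reprise of Corollary~\ref{cor:security_ddp}'s proof — but the careful justification that the worst case for $B$ really is ``every client a candidate in at most $r$ rounds'': that the server can losslessly convert a targeted client's candidacy into selection, and that independence together with the monotonicity in $p$ and in the population size legitimizes plugging in $n_{min}$ and $n_{max}$ simultaneously.
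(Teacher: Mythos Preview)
Your proposal is correct and follows essentially the same route as the paper: reuse the $A$-part and the composition argument from Corollary~\ref{cor:security_ddp} verbatim, and replace the recursive bound on $\Pr[B]$ by the product $\left(\sum_{i=0}^{r}\binom{R}{i}p^{i}(1-p)^{R-i}\right)^{n}\ge \left(\sum_{i=0}^{r}\binom{R}{i}p^{i}(1-p)^{R-i}\right)^{n_{max}}=1-q_r$, justified by the server's ability to precompute the candidacy schedule. Your write-up is in fact more explicit than the paper's --- you spell out why candidacy can be losslessly converted to selection and why the monotonicity in $p$ and in the population size lets you plug in $n_{min}$ and $n_{max}$ --- whereas the paper simply states the displayed chain of (in)equalities and appeals to ``the server's predictability''.
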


\begin{proof}
	The proof closely resembles that of Corollary~\ref{cor:security_ddp}, with the exception of the calculation of $1 - q_r$, which represents a lower bound on the probability that all clients in the population are selected for at most $r$ rounds (Event $B$).
	Considering the server's predictability, we could not get a bound as tight as that of \texttt{Client-Centric}.
	Instead, the bound is
	\begin{equation*}
		\begin{aligned}
			\Pr[B] &= \left( \sum_{i=0}^r \binom{R}{i} \left(\frac{1}{m} \lfloor \frac{\alpha s m}{n_{min}} \rfloor \right)^i \left(1 - \frac{1}{m}\lfloor \frac{\alpha s m}{n_{min}} \rfloor \right)^{R - i} \right)^n \\
			&\geq \left( \sum_{i=0}^r \binom{R}{i} \left(\frac{1}{m} \lfloor \frac{\alpha s m}{n_{min}} \rfloor \right)^i \left(1 - \frac{1}{m}\lfloor \frac{\alpha s m}{n_{min}} \rfloor \right)^{R - i} \right)^{n_{max}} \\
			&= 1 - q_r.
		\end{aligned}
	\end{equation*}
	The corollary follows when one substitutes this lower bound of $\Pr[B]$ with that used in the proof of Corollary~\ref{cor:security_ddp} and follows the remainder of that proof.
\end{proof}

\begin{figure}[t]
	\centering
	\begin{subfigure}[b]{1.0\columnwidth}
		\centering
		\includegraphics[width=\columnwidth]{./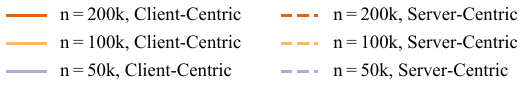}
	\end{subfigure}
	\begin{subfigure}[b]{1.0\columnwidth}
		\centering
		\includegraphics[width=\columnwidth]{./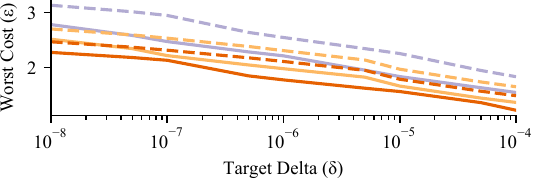}
	\end{subfigure}
	\caption{\texttt{Client-Centric} provides stronger DP privacy guarantees (lower $\epsilon$) compared to \texttt{Server-Centric}.}
	\label{fig:cor_2_cmp}
\end{figure}

To visualize the difference between the bounds presented in Corollary~\ref{cor:security_ddp} and~\ref{cor:security_ddp_server}, we refer to the example discussed in~\cref{sec:design_security}.
We set $n_{max}$ to $10^9$ for \texttt{Server-Centric}, which we consider to be the physical limit for the number of devices in the world.
Figure~\ref{fig:cor_2_cmp} illustrates the gap in their privacy cost, demonstrating a slight disadvantage of \texttt{Server-Centric} across all scenarios.
For instance, in the case of dealing with a population of size $2 \times 10^5$ when employing the same noise multiplier during training on FEMNIST, \texttt{Client-Centric} achieves $\epsilon=1.8$, whereas \texttt{Server-Centric} results in a higher cost of $\epsilon=2.3$.
%!TEX root = ../main.tex

\section{Execution Efficiency of Informed Selection}
~\label{sec:appendix_informed}

Table~\ref{tab:info_perf} presents the per-round time and network cost measured at both the server and average participant with informed selection.
Compared to Oort, \system{} implemented with \texttt{Client-Centric} incurs a time cost of less than 5\%, 10\%, and 5\% for population sizes of 100, 400, and 700, respectively.
The overhead for \texttt{Server-Centric} is 2\%, 0\%, and 0\%, respectively.
Regarding the network footprint, \system{} consistently introduces less than 1\% additional cost.
In all, \system{} achieves an acceptable time cost and negligible network overhead in both random selection (\cref{sec:eval_efficiency}) and informed selection.

\end{document}